\DeclareTextSymbolDefault{\textquotedbl}{T1}
\theoremstyle{plain}
\newtheorem{lem}{\protect\lemmaname}
\theoremstyle{plain}
\newtheorem{cor}{\protect\corollaryname}
\theoremstyle{plain}
\newtheorem{prop}{\protect\propositionname}
\theoremstyle{definition}
\newtheorem{defn}{\protect\definitionname}
\providecommand{\corollaryname}{Corollary}
\providecommand{\definitionname}{Definition}
\providecommand{\lemmaname}{Lemma}
\providecommand{\propositionname}{Proposition}
\providecommand{\corollaryname}{Corollary}
\providecommand{\definitionname}{Definition}
\providecommand{\lemmaname}{Lemma}
\providecommand{\propositionname}{Proposition}
\providecommand{\corollaryname}{Corollary}
\providecommand{\definitionname}{Definition}
\providecommand{\lemmaname}{Lemma}
\providecommand{\propositionname}{Proposition}
\providecommand{\corollaryname}{Corollary}
\providecommand{\definitionname}{Definition}
\providecommand{\lemmaname}{Lemma}
\providecommand{\propositionname}{Proposition}
\providecommand{\corollaryname}{Corollary}
\providecommand{\definitionname}{Definition}
\providecommand{\lemmaname}{Lemma}
\providecommand{\propositionname}{Proposition}
\begin{document}

\title{\textbf{Tax Mechanisms and Gradient Flows }}

\author{\textbf{Stefan Steinerberger}\thanks{S.S. is supported by the NSF (DMS-1763179) and the Alfred P. Sloan
Foundation.}\textbf{ }\\
 {\small{}{}{}{}{}{}{}Yale University} \and \textbf{Aleh Tsyvinski}\\
 {\small{}{}{}{}{}{}{} Yale University}\thanks{We thank Manuel Amador, Andy Atkeson, Felix Bierbrauer, Patrick Bolton,
Jaroslav Borovicka, Pierre Boyer, Hector Chade, Alfred Galichon, Martin
Hellwig, Oleg Itskhoki, Ilse Lindenlaub, Benny Moldovanu, Chris Phelan,
Georgii Riabov, Larry Samuelson, Florian Scheuer, Philipp Strack,
Kjetil Storesletten, Juuso Valimaki, and, especially, Nicolas Werquin.}}
\maketitle
\begin{abstract}
We demonstrate how a static optimal income taxation problem can be
analyzed using dynamical methods. Specifically, we show that the taxation
problem is intimately connected to the heat equation. Our first result
is a new property of the optimal tax which we call the fairness principle.
The optimal tax at any income is invariant under a family of properly
adjusted Gaussian averages (the heat kernel) of the optimal taxes
at other incomes. That is, the optimal tax at a given income is equal
to the weighted by the heat kernels average of optimal taxes at other
incomes and income densities. Moreover, this averaging happens at
every scale tightly linked to each other providing a unified weighting
scheme at all income ranges. The fairness principle arises not due
to equality considerations but rather it represents an efficient way
to smooth the burden of taxes and generated revenues across incomes.
Just as nature wants to distribute heat evenly, the optimal way for
a government to raise revenues is to distribute the tax burden and
raised revenues evenly among individuals. We then construct a gradient
flow of taxes -- a dynamic process changing the existing tax system
in the direction of the increase in tax revenues -- and show that
it takes the form of a heat equation. The fairness principle holds
also for the short-term asymptotics of the gradient flow, where the
averaging is done over the current taxes. The gradient flow we consider
can be viewed as a continuous process of a reform of the nonlinear
income tax schedule and thus unifies the variational approach to taxation
and optimal taxation. We present several other characteristics of
the gradient flow focusing on its smoothing properties. 
\end{abstract}
\newpage{}

\section*{Introduction}

The classic static optimal nonlinear income taxation problem (Mirrlees
1971) is one of the prototypical examples of mechanism design problems.\footnote{See, e.g., the book by Salanie (2011).}
Diamond (1998) and Saez (2001) derive expressions for the optimal
taxes that are widely used in public finance. Yet, even in the simplest
cases, those do not provide closed-form solutions or a full characterizaton
for the optimal income tax. The solution is a coupled system of differential
equations and has relatively few known characteristics.\footnote{See, e.g., Hellwig (2010) for the state-of-the art analysis of incentive
problems with unidimensional characteristics. } 

A new aspect of our analysis is in the application of dynamical methods
to static mechanism design problems. Starting from a given tax (either
optimal or suboptimal), we associate with it a dynamical system --
the analysis of which yields a new view of the problem. This dynamical
system is tightly connected to a heat equation, one of the most well-studied
and well-behaved partial differential equations.

Our main result shows that the optimal tax satisfies a certain invariance
relationship which we call a fairness principle. The fairness principle
states that an optimal tax at a given income is equal to the properly
weighted average of the taxes at all other incomes. We call it ``fair''
not because the planner has a social welfare function that places
a weight on equality but rather that it is efficient to treat agents
at various incomes similarly. In this sense, fairness is a different
version of smoothing the burden of the deadweight losses of taxes
and raising revenues.

We start by analyzing the static optimal nonlinear income tax. We
associate with the equation determining the optimal tax a dynamical
object -- the heat kernel which determines a family of averaging
functions at different times (or scales). We show that the optimal
tax is invariant under the heat kernel. Specifically, the optimal
tax at any given income is determined as a weighted average of the
optimal taxes at other incomes and of income densities. Importantly,
we show that this averaging is done at \textit{all} times (scales).
What we mean by this is that as the heat kernel is spreading wider
(with, say, time), the optimal tax is equal to the average tax at
the wider sets of incomes and income densities: from the weight placed
on incomes just around the point to more and more incomes. In other
words, the planner at the optimum ensures that the deadweight loss
or the behavioral effect of taxes is distributed fairly -- as the
weighted average of taxes at all scales. This behavioral effect is
added to the statutory impact of taxes or the mechanical effect which
is also a weighted average over all scales of income densities. A
fact of crucial importance is that the fairness principles for different
scales are tightly linked via the so called semigroup property and
is thus far from arbitrary. That is, there is one unified weighting
scheme across all of the income scales. We then show that the behavior
of the heat kernel is essentially Gaussian. For any time (scale),
we provide a Gaussian upper bound that shows that the heat kernel
decays as the Gaussian. For the small time asymptotics, the heat kernel
is exactly a Gaussian scaled by the density and the elasticity at
the given income at the optimum. A corollary of the fairness principle
is a formula determining the optimal marginal tax by the global behavior
of the level of the optimal tax code. These results are new to the
taxation and mechanism design literature. 

Second, we construct a gradient flow for any initial (optimal or suboptimal)
tax function. The gradient flow is a dynamical system that changes
the underlying tax function in the direction of the steepest descent
of the tax revenue functional. We show that the fairness principle
also holds along the trajectory of the gradient flow. Specifically,
we prove that the small time behavior of the gradient flow (that is,
the short-time asymptotics) is such that the evolved tax is equal
to the Gaussian average of the current taxes at all incomes and income
densities, where the Gaussian is adjusted by the density and the elasticity
at the current tax system. The optimal tax is a stationary point of
the gradient flow. That is, starting from the optimum both the averaging
and the agents' behavior stays constant.

We then propose to use the operator-splitting methods corresponding
to the changes in the tax revenues and to the agent's behavior and
show that the gradient flow is a heat equation within each step of
operator splitting. The operator splitting technique has a natural
economic meaning. On the trajectory of the tax reform, the government,
for an infinitesimally short time, evolves taxes in the direction
of increased revenues, keeping the density of agents\textquoteright{}
incomes and elasticities fixed at their value observed in the current
economy. Then the densities and the elasticities are recomputed. That
is, the government evaluates the changes in revenues under the current
information given by the exogenous sufficient statistics evaluated
at a given time.\footnote{See, e.g., Kleven (2018) for a recent review of the sufficient statistics
approach in taxation.} Within each step of this process, the gradient flow of tax revenues
is a heat equation. Having identified the heat equation as the principal
object governing the evolution of the tax function, there is a wealth
of additional results one can obtain to study its behavior. We primarily
focus on showing that the gradient flow possesses a very nice underlying
smoothing structure. First, we show that the gradient flow is trying
to smooth out the rough irregularities in the tax schedule. That is,
it acts first in the income regions which have the largest deviation
of the marginal tax from the properly defined reference point. These
deviations are weighted by the natural weighing measure that depends
on the labor income elasticity and the density of the incomes. Second,
we show that the reform viewed as the gradient flow leads to the continuous
tax systems, mollifying any tax scheme instantaneously. Third, we
use the Sturm-Liouville theory to describe further characteristics
of the gradient flow and, in particular, to show exponential decay
of the variability of the tax system compared to the reference point.

The fact that the gradient flow gives rise to a law of motion for
the tax function characterized by a parabolic PDE (namely, a heat
equation) is subtle and is at the essence of the economics of the
problem. The well-known and much studied heat equation arises in a
new, and possibly surprising, setting: in an environment where a government
is interested in collecting income tax revenue and the agents are
heterogeneous in their skills. In a broad sense, the agents and the
government have opposing objectives: to maximize revenue for the government,
and to maximize the utility for the agents, given taxes. We show,
among other things, that a local process where the government tries
to maximize its income via taxation leads, in a natural way, to the
heat equation. Just as nature wants to distribute heat evenly on a
background (leading to the spreading of heat), the optimal way for
a government to raise revenues is to distribute the tax burden and
raised revenues evenly among individual subjects -- this is perhaps
particularly surprising. This fairness is indeed a phenomenon coming
from agents adapting to uneven taxation systems by working more or
less, respectively. We believe this to be a new way of arriving at
the heat equation and a new tax and revenue smoothing result in taxation.

The construction of the dynamical system that we propose also contributes
to unifying the optimal and the variational approach to taxation.
The variational approach considers a potentially suboptimal tax and
proceeds with varying it locally to derive the formulas for the effects
of the tax reforms.\footnote{See, e.g., Saez (2001), Kleven and Kreiner (2006), and Golosov, Tsyvinski,
and Werquin (2014) for the methodology; Kleven, Kreiner, and Saez
(2009) and Jacquet and Lehmann (2015) for the analysis of the multidimensional
types; Saez and Stantcheva (2016) and Bierbrauer and Boyer (2018)
for the political economy context; Sachs, Tsyvinski, and Werquin (2016)
and Scheuer and Werning (2016) for the analysis in general equilibrium;
and Saez and Stantcheva (2018) for capital income taxation.} When considering the effects of the local tax reform, a natural question
to ask is whether there is a process of reforms that may lead to the
optimum. Tirole and Guesnerie (1981) construct such a process for
linear taxes based on gradient projections, leading to an ordinary
differential equation. The environment with nonlinear taxes is significantly
more challenging as now the whole tax function is evolved as opposed
to just one linear tax -- we therefore have the steepest descent
path in a space of functions. We have constructed such a process that
leads, within each step, to a well-behaved partial differential equation.
One advantage of this process is that it uses, similarly to the variational
approach, only the current information about the economy, such as
elasticities and the density of incomes. The second advantage is that
it gives rise to one of the most regular and well-behaved mathematical
objects -- the heat equation -- allowing for a wealth of characterizations.
Additionally, our analysis shows that continuous version of the iterative
fixed point method commonly used for computing optimal taxes (see,
e.g., Brewer, Saez, and Shepard 2010) can be represented within each
step as a heat equation.

Our analysis of the static taxation problem from the dynamical point
of view, the construction of the gradient flow, and the analysis of
the optimum and the gradient flow from the dynamical perspective are
new to the mechanism design and optimal taxation literature. 

More broadly, there are several papers that are connected to our work.
Bolton and Harris (2010) is perhaps the most comparable in terms of
the approach. They associate a dynamic risk sharing rule with that
of the static problem. Furthemore, Bolton and Harris (2010) obtain
an elegant asymptotic expansion of the dynamic problem around a myopic
optimum showing how the static problem is modified by the dynamic
correction terms. 

Sonnenschein (1981, 1982) and Artzner, Simon, and Sonnenschein (1986)
derive a heat equation as a gradient process of the firms adjusting
the commodity they produce by maximizing the rate of change in profit
subject to a quadratic cost of adjustment. McCann (2014) argues that
this result is a precursor to some of the results on the gradient
flows in the optimal transport literature.\footnote{See also Blanchet and Carlier (2015).}

Some of the techniques that we use have parallels in the optimal transport
literature (see, e.g., Villani (2003)) in which there is a renewal
of interest in economics (see, e.g., early work of Chiappori, McCann,
and Nesheim (2010), a comprehensive book by Galichon (2016), or a
review in the context of matching models by Chiappori and Salanie
(2016)). One important result in the optimal transport literature
is showing that a Fokker-Planck equation arises as a gradient flow
in a Wasserstein space (Jordan, Kinderlehrer, and Otto (1998)). In
mechanism design, optimal transport has been recently used in the
context of multi-dimensional screening by Figalli, Kim, and McCann
(2011) and Daskalakis, Deckelbaum, and Tzamos (2017).

Our analysis is interesting also from the purely mathematical point
of view. The heat equation is a classical object in mathematics and
physics (indeed, so classical that Fourier's specific way of solving
it via trigonometric series is the origin of harmonic analysis, an
entire subfield of mathematics). Its importance suggests that it should
indeed appear in a wide variety of settings. Here, we present a new
setting in which the heat equation naturally arises.

The paper is structured over two main parts that while connected are
self-contained. Section \ref{sec:The-optimal-tax} deals with the
optimal nonlinear tax. Section \ref{sec:Gradient-flow} deals with
the gradient flow of taxes.

\section{\label{sec:Environment}Environment}

For clarity of exposition, we start by presenting the simplest economic
environment of tax mechanisms. 

\subsection{Individuals}

Agents are characterized by an exogenous and fixed productivity type
$\theta\in\Theta\subset\mathbb{R}_{+}$. Preferences over consumption
$c$ and labor effort $l$ are represented by the utility function
$U\left(c,l\right)=c-v\left(l\right)$, where the disutility of labor
effort $v$ is twice continuously differentiable, increasing and strictly
convex. The government levies a tax liability $T:\mathbb{R}_{+}\rightarrow\mathbb{R}$
which can be an arbitrarily non-linear function of the individual's
labor income $y=\theta\times l$. The agent's budget constraint is
$c=\theta l-T(\theta l)$.

The optimization problem of an individual with type $\theta$ reads:
\begin{eqnarray}
 & \underset{l\ge0}{\max} & \theta l-T(\theta l)-v(l).\label{eq:IndividualProblem}
\end{eqnarray}
We denote by $l(\theta,T)\in\mathbb{R}_{+}$ the argmax of this problem
and by $y(\theta,T)\in\mathbb{R}_{+}$ the agent's labor income. For
ease of notation, when there is no ambiguity we remove the argument
$T$ from these variables and write them as $l\left(\theta\right)$,
$y\left(\theta\right)$.

\paragraph{Characterization of individual behavior.}

Assuming that the tax function $T$ is continuously differentiable,
labor income $y\left(\theta\right)\equiv y(\theta,T)$ is characterized
by the first-order condition: 
\begin{eqnarray}
1-T^{\prime}(y\left(\theta\right)) & =v^{\prime}\left(\frac{y\left(\theta\right)}{\theta}\right)\frac{1}{\theta} & .\label{eq:first order condition}
\end{eqnarray}
We assume that no individual $\theta$ is indifferent between two
or more incomes in the initial equilibrium: for all $\theta$, the
individual problem (\ref{eq:IndividualProblem}) has a unique global
maximum given the tax system $T$. It is straightforward then to show
that there is a one-to-one map between productivity types $\theta$
and pre-tax incomes $y(\theta)$.

\paragraph{Productivity and income distributions.}

We denote by $H\left(\theta\right)$ the c.d.f. of $\theta\in\Theta$,
and by $h\left(\theta\right)$ the corresponding density function.
We assume that the set $\Theta$ is a compact interval of $\mathbb{R}_{+}$,
and that the density of types $h$ is equal to zero at the boundaries
of $\Theta$. We also denote by $\Phi\left(y\right)$ and $\phi\left(y\right)$
the c.d.f. and the p.d.f. of incomes $y\in\mathbb{Y}\subset\mathbb{R}_{+}$.
We assume that the density of incomes $\phi$ is continuous and bounded
away from zero on any finite interval $[\underline{y},\bar{y}]\subset\mathbb{Y}$
with $\underline{y}>0$.\footnote{Our results can be straightforwardly generalized to the case of types
and incomes in the whole space $\mathbb{R}_{+}$ by using an increasing
sequence of compact sets $\Theta,\mathbb{Y}\subset\mathbb{R}_{+}$.}

\subsection{Government}

We define government revenue by 
\begin{eqnarray}
R\left(T\right) & = & \int_{\Theta}T(y(\theta,T))\text{d}H\left(\theta\right).\label{eq:SocialObjective}
\end{eqnarray}
It is equal to the sum over agents $\theta\in\Theta$ of the tax liability
on their income $y(\theta,T)$, taking into account their optimizing
behavior given the tax system $T$.\footnote{It is immediate to extend our results to the case of the government
maximizing a social welfare function.}

We denote by 
\[
p\left(y\right)=-\frac{\partial\ln(1-T^{\prime}\left(y\right))}{\partial\ln y}=\frac{yT^{\prime\prime}\left(y\right)}{1-T^{\prime}\left(y\right)}
\]
the local rate of progressivity of the tax schedule. It is equal to
(minus) the elasticity of the retention rate $1-T^{\prime}(y)$ with
respect to income $y$ (see, e.g., Musgrave and Thin (1948)).

\subsection{Taxable income elasticity}

The structural labor supply elasticity of agent $\theta$ with respect
to the retention rate $r\left(\theta\right)=1-T^{\prime}(y(\theta))$
is defined by 
\[
e\left(\theta\right)=\frac{\partial\ln y\left(\theta\right)}{\partial\ln r\left(\theta\right)}=\frac{v^{\prime}(l\left(\theta\right))}{l\left(\theta\right)v^{\prime\prime}(l\left(\theta\right))}.
\]
We define the compensated labor income response along the non-linear
budget constraint by 
\begin{eqnarray}
\varepsilon\left(\theta\right) & = & \frac{y\left(\theta\right)}{r\left(\theta\right)}\frac{e\left(\theta\right)}{1+p(y\left(\theta\right))e\left(\theta\right)}.\label{eq:LaborSupplyElasticity}
\end{eqnarray}
Assuming that there is a one-to-one map between types $\theta$ and
incomes $y\left(\theta\right)$, we can equivalently denote the labor
supply responses $\varepsilon\left(\theta\right)$ by $\varepsilon\left(y\right)$.
In the sequel we use both notations interchangeably. We assume that
$\varepsilon\left(y\right)$ is continuous and bounded away from zero
on any subinterval $[\underline{y},\bar{y}]\subsetneq\mathbb{Y}$.

The variable $\varepsilon\left(\theta\right)$ is equal to the partial
derivative of labor income $y\left(\theta\right)$ with respect to
the retention rate $r\left(\theta\right)$, normalized by the term
$1+p(y\left(\theta\right))e\left(\theta\right)$. Intuitively, this
term accounts for the fact that the agent's labor income adjustment
in response to an exogenous increase in his retention rate $r\left(\theta\right)$
leads to an endogenous shift in his tax rate by $p(y\left(\theta\right))$,
which in turn causes a further labor income adjustment $e\left(\theta\right)$.
Solving for the fixed point leads the total response (\ref{eq:LaborSupplyElasticity}).

\section{Tax reforms and optimum taxation}

In this section we define a notion of local tax reforms, and derive
their effects on individual behavior and government revenue. We then
derive a formula for the optimal tax. All of the results in this section
are standard in the literature.

\subsection{Variations of taxes}

We define a direction of reform of the tax function $T$ as a continuously
differentiable function $\hat{T}:\,\mathbb{R}_{+}\rightarrow\mathbb{R}$.
The perturbed tax function is then $T+\mu\hat{T}$, where $\mu>0$
is the size of the reform in the direction $\hat{T}$. In this section,
we derive the first-order change in government revenue $R(T)$ in
response to the tax reform $\mu\hat{T}$ as $\mu\rightarrow0$. That
is, we compute the Gateaux differential of the functional $T\mapsto R(T)$,
formally defined by: 
\begin{eqnarray*}
\delta R(T,\hat{T}) & \equiv & \lim_{\mu\rightarrow0}\frac{R(T+\mu\hat{T})-R(T)}{\mu}.
\end{eqnarray*}

We first describe the impact of a tax reform $\hat{T}$ of the tax
schedule $T$ on the government revenue. 
\begin{lem}
\label{lem:GateauxDiffIncome} The Gateaux differential of government
revenue in the direction $\hat{T}$ is given by: 
\begin{eqnarray}
\delta R(T,\hat{T}) & = & \int_{\mathbb{Y}}\hat{T}\left(y\right)\phi\left(y\right)dy-\int_{\mathbb{Y}}T^{\prime}\left(y\right)\varepsilon\left(y\right)\hat{T}^{\prime}\left(y\right)\phi\left(y\right)dy.\label{eq:GateauxDiff_SocialWelfare-1}
\end{eqnarray}
\end{lem}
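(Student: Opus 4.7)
The plan is to differentiate $R(T+\mu\hat T) = \int_\Theta (T+\mu\hat T)\bigl(y(\theta,T+\mu\hat T)\bigr)\,dH(\theta)$ with respect to $\mu$ at $\mu=0$. Applying the chain rule to the integrand splits $\delta R(T,\hat T)$ into a \emph{mechanical} contribution $\int_\Theta \hat T(y(\theta))\,dH(\theta)$, in which the extra tax is levied on the unperturbed income, and a \emph{behavioral} contribution $\int_\Theta T'(y(\theta))\,\partial_\mu y(\theta,T+\mu\hat T)\big|_{\mu=0}\,dH(\theta)$, which records how incomes re-optimize. Using the one-to-one map $\theta \leftrightarrow y$ guaranteed by (\ref{eq:first order condition}), the mechanical term becomes $\int_{\mathbb Y}\hat T(y)\phi(y)\,dy$, which is already the first integral in (\ref{eq:GateauxDiff_SocialWelfare-1}).

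The behavioral term requires computing $\partial_\mu y$ by implicit differentiation of the perturbed first-order condition $1-T'(y)-\mu\hat T'(y) = v'(y/\theta)/\theta$. Differentiating in $\mu$ and evaluating at $\mu=0$ yields
$$-\hat T'(y) \;=\; \Bigl[T''(y) + \tfrac{v''(y/\theta)}{\theta^2}\Bigr]\,\frac{\partial y}{\partial\mu}\Big|_{\mu=0}.$$
I would then rewrite the bracket in economic primitives: by definition of $p$, $T''(y)=p(y)\,r(y)/y$ with $r=1-T'$; and using $v'(l)=\theta r$ from the FOC together with the definition $e=v'(l)/(l v''(l))$, one finds $v''(y/\theta)/\theta^2 = r/(y\,e)$. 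The bracket therefore collapses to $r(1+pe)/(ye)$, and inverting gives $\partial_\mu y\big|_{\mu=0} = -\varepsilon(y)\,\hat T'(y)$ by the definition (\ref{eq:LaborSupplyElasticity}). Substituting into the behavioral term and performing the same $\theta\to y$ change of variables produces $-\int_{\mathbb Y}T'(y)\,\varepsilon(y)\,\hat T'(y)\,\phi(y)\,dy$, which is the second integral in the lemma.

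The main difficulty is technical rather than conceptual. To interchange the $\mu$-derivative with the integral over $\Theta$ one must know that $\mu\mapsto y(\theta,T+\mu\hat T)$ is differentiable in $\mu$, uniformly enough in $\theta$ to dominate the integrand; this follows from the implicit function theorem applied to the FOC, since the second-order condition $T''+v''/\theta^2>0$ is automatic from the assumed uniqueness of the individual maximum, and the continuity/compactness assumptions on $\Theta$ and $\mathbb Y$ furnish the required uniform bounds. Once this is secured, the lemma reduces to the algebraic collapse of $T''$ and $v''/\theta^2$ into the single compensated response $\varepsilon$ --- which is precisely the reason that $\varepsilon$, and not the bare elasticity $e$, is the natural sufficient statistic appearing in nonlinear tax-reform formulas.
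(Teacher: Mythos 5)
Your proposal is correct and follows essentially the same route as the paper: the same split into a mechanical term and a behavioral term, the same implicit differentiation of the first-order condition to get $-\hat T'(y)=[T''(y)+v''(y/\theta)/\theta^{2}]\,\partial_\mu y$, and the same identification of the reciprocal bracket with $\varepsilon(y)$. The only differences are that you explicitly carry out the algebra collapsing $T''+v''/\theta^{2}$ into $r(1+pe)/(ye)$ (which the paper compresses into ``using the definition of elasticity'') and that you flag the interchange-of-limits issue, both of which are welcome additions rather than deviations.
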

\begin{proof}
The Gateaux differential of the government revenue functional in the
direction $\hat{T}$ is given by:

\[
\delta R(T,\hat{T})=\int_{\mathbb{Y}}\hat{T}\left(y\right)\phi\left(y\right)dy+\int_{\mathbb{Y}}T^{\prime}\left(y\right)\delta y\left(\theta\right)\phi\left(y\right)dy,
\]

where the differential of the agent's income $\delta y\left(\theta\right)$
is given by the Gateaux differential of the agent's first order condition
(\ref{eq:first order condition}): 
\[
-T''(y\left(\theta\right))\delta y\left(\theta\right)-\hat{T}^{\prime}(y\left(\theta\right))=v''\left(\frac{y\left(\theta\right)}{\theta}\right)\frac{1}{\theta^{2}}\delta y\left(\theta\right).
\]
Solving for $\delta y$ and using the definition of elasticity $\varepsilon\left(y\right)$
gives $\delta y\left(\theta\right)=-\hat{T}^{\prime}(y\left(\theta\right))\varepsilon\left(y\left(\theta\right)\right).$ 
\end{proof}
Equation (\ref{eq:GateauxDiff_SocialWelfare-1}) shows that the first-order
effect of the tax reform $\hat{T}$ on government revenue, $\delta R(T,\hat{T})$,
is given by the sum of two terms. The first integral on the right
hand side is the statutory impact of the reform, i.e., the amount
of revenue raised mechanically by changing the tax payment of each
agent with income $y$ by $\hat{T}\left(y\right)$. This term simply
sums these additional tax payments over the whole population, using
the density of incomes $\phi\left(\cdot\right)$, assuming that everyone's
behavior remains unchanged following the reform. The second integral
is the excess burden, or the deadweight loss, of the tax reform. Specifically,
consider the agents who earned income $y$ in the initial equilibrium,
i.e., before the tax reform. An increase in their marginal tax rate
by $\hat{T}^{\prime}\left(y\right)$ lowers the labor income of these
agents by $\varepsilon\left(y\right)$, by definition of their labor
income response along the non-linear initial tax schedule. This in
turn reduces government revenue by a fraction $T^{\prime}\left(y\right)$
of this income loss. Summing these effects over the population using
the density of incomes $\phi\left(\cdot\right)$ leads to expression
(\ref{eq:GateauxDiff_SocialWelfare-1}).

Note that this expression gives the effects of tax reforms $\hat{T}\left(y\right)$
in closed-form since all of the variables (the tax schedule $T\left(y\right)$,
the labor supply elasticities $\varepsilon\left(y\right)$, the density
of incomes $\phi\left(y\right)$) are observed or can be estimated
empirically in the economy with the given tax $T\left(y\right)$.

\subsection{Optimal Tax}

Lemma \ref{lem:GateauxDiffIncome} provides a formula for the revenue
effects of any tax reform $\hat{T}$ in the economy starting from
any, optimal or suboptimal, tax schedule $T$.

We now obtain a characterization of the optimal (i.e., government
revenue maximizing) tax schedule $T_{*}$ by imposing that no tax
reform has a non-zero first-order effect on government revenue, i.e.,
$\delta R(T_{*},\hat{T})=0$ for all $\hat{T}:\mathbb{R}_{+}\rightarrow\mathbb{R}$.
Let $\varepsilon_{*}\left(y\right)$, $\phi_{*}\left(y\right)$, and
$\Phi_{*}\left(y\right)$ denote the compensated labor income response
and the p.d.f. and c.d.f. of incomes given the tax schedule $T_{*}$.

Integrating by parts the Gateaux differential of the government revenue
functional (\ref{eq:GateauxDiff_SocialWelfare-1})

\[
\delta R(T,\hat{T})=\int_{\mathbb{Y}}\hat{T}\left(y\right)\phi\left(y\right)dy+\int_{\mathbb{Y}}\hat{T}\left(y\right)\frac{d}{dy}\left(T^{\prime}\left(y\right)\varepsilon\left(y\right)\phi\left(y\right)\right)dy,
\]

and setting it to zero for any $\hat{T}\left(y\right)$ yields: 
\begin{equation}
0=\phi_{*}\left(y\right)+\frac{d}{dy}\left(T_{*}^{\prime}\left(y\right)\varepsilon_{*}\left(y\right)\phi_{*}\left(y\right)\right).\label{eq:FOC government}
\end{equation}

Integrating with respect to $y$ yields the the following Corollary
(due to Mirrlees 1971, Diamond 1998, Saez 2001). 
\begin{cor}
\label{cor:The-optimal-tax}The optimal tax schedule $T_{*}$ satisfies:
for all $y\in\mathbb{Y}$, 
\begin{eqnarray}
T_{*}^{\prime}\left(y\right) & = & \frac{1}{\varepsilon_{*}\left(y\right)}\,\frac{1-\Phi_{*}\left(y\right)}{\phi_{*}\left(y\right)},\label{eq:OptimumTaxSchedule}
\end{eqnarray}
\end{cor}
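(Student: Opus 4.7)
The plan is to obtain the formula by integrating the pointwise first-order condition \eqref{eq:FOC government} from $y$ to the upper endpoint of the income support, so essentially all the work is done once we have Lemma \ref{lem:GateauxDiffIncome}. The starting point is to rewrite \eqref{eq:GateauxDiff_SocialWelfare-1} via integration by parts, which is the step already performed in the body of the text: because the perturbation direction $\hat T$ is arbitrary and compactly supported (or because $\varepsilon_*\phi_*$ vanishes at the boundary of $\mathbb{Y}$, killing the boundary terms), one may pass from the weak statement $\delta R(T_*,\hat T)=0$ to the strong pointwise Euler--Lagrange equation
\[
\phi_*(y) + \frac{d}{dy}\bigl(T_*'(y)\,\varepsilon_*(y)\,\phi_*(y)\bigr) = 0 \qquad \text{for all } y \in \mathbb{Y}.
\]

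Next I would integrate this identity from an arbitrary $y\in\mathbb{Y}$ up to the upper boundary $\bar{y}$ of the income support. The integral of $\phi_*$ over $[y,\bar{y}]$ is $1-\Phi_*(y)$ by definition of the c.d.f., while the integral of the derivative telescopes to
\[
T_*'(\bar{y})\,\varepsilon_*(\bar{y})\,\phi_*(\bar{y}) - T_*'(y)\,\varepsilon_*(y)\,\phi_*(y).
\]
The boundary term at $\bar{y}$ vanishes: the assumption that the density of types $h$ equals zero at the boundaries of $\Theta$, combined with the one-to-one map between types and incomes, forces $\phi_*(\bar{y}) = 0$. (One should note that $\varepsilon_*$ need not vanish there, but it is finite, and $T_*'(\bar{y})$ is finite under the standing regularity assumptions, so the product is zero.) Rearranging the resulting equality
\[
0 = \bigl(1-\Phi_*(y)\bigr) - T_*'(y)\,\varepsilon_*(y)\,\phi_*(y)
\]
for $T_*'(y)$ gives exactly \eqref{eq:OptimumTaxSchedule}.

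The only step that requires a bit of care is justifying the passage from $\delta R(T_*,\hat T)=0$ for all admissible $\hat T$ to the pointwise condition \eqref{eq:FOC government}: this is a standard fundamental lemma of the calculus of variations, applied to the continuous integrand $\phi_* + \tfrac{d}{dy}(T_*'\varepsilon_*\phi_*)$, using a dense family of test functions $\hat T$ (for instance smooth bump functions supported in the interior of $\mathbb{Y}$). I do not anticipate a genuine obstacle here; rather, the substantive content of the argument has already been packaged into Lemma \ref{lem:GateauxDiffIncome}, and the corollary is essentially a computation.
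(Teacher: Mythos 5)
Your proposal is correct and follows essentially the same route as the paper: pass from $\delta R(T_*,\hat T)=0$ to the pointwise Euler--Lagrange equation (\ref{eq:FOC government}) via integration by parts, then integrate in $y$, with the constant of integration pinned down by the vanishing of $\phi_*$ (hence of $T_*'\varepsilon_*\phi_*$) at the top of the income support. The paper compresses this into ``integrating with respect to $y$''; you have simply made the boundary-term argument explicit.
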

Formula (\ref{eq:OptimumTaxSchedule}) shows that the optimal marginal
tax rate at income level $y$ is the product of two terms. First,
it is proportional to the inverse compensated labor income response
to tax rates at income $y$, $1/\varepsilon_{*}\left(y\right)$: the
higher the disincentive effect of marginal tax rates, the lower the
optimal tax rate. The second term is related to the hazard rate of
the income distribution, $(1-\Phi_{*}\left(y\right))/(\phi_{*}\left(y\right))$
and is a benefit-cost ratio that measures the fraction of agents whose
tax liability increases by a lump-sum amount in response to a marginal
tax rate increase at income $y$, relative to the fraction of agents
whose labor supply is distorted.

It is important to note that, while certainly important to gain economic
insights, the formula (\ref{eq:OptimumTaxSchedule}) is not a full
solution for the optimum tax as $y$ itself does depend on $T$ in
a nonlinear way since individuals optimize with respect to the tax
code. 

It is useful to discuss two different variants of the optimal tax
formula used in the literature. Saez (2001) derives the formula that
is identical to (\ref{eq:OptimumTaxSchedule}). On the left hand side
of this formula is the marginal tax on income which is the main object
of interest. The right hand side is, however, defined over endogenous
variables -- the elasticity along the nonlinear budget constraint
$\varepsilon_{*}$$\left(y\right)$, the density $\phi_{*}\left(y\right)$
and the c.d.f. $\Phi_{*}\left(y\right)$ of incomes are evaluated
at the optimum and hence themselves depend on the income tax schedule
$T_{*}\left(y\right)$. The Diamond (1998) formula, in the case of
iso-elastic preferences for labor $v\left(l\right)=\frac{l^{1+1/\epsilon}}{1+1/\epsilon}$,
is given by 
\[
\frac{T_{*}^{\prime}\left(y\left(\theta\right)\right)}{1-T_{*}^{\prime}\left(y\left(\theta\right)\right)}=\left(1+\frac{1}{\epsilon}\right)\frac{1-F\left(\theta\right)}{\theta f\left(\theta\right)}.
\]

The right hand side is a closed-form expression, since the distribution
of types $\theta$ (cdf $F$, pdf $f$) is exogenous and the elasticity
$\epsilon$ is given. However, the left hand side is not a closed
form expression for the income tax schedule $T\left(\cdot\right)$.
Indeed, note that the left hand side gives the marginal tax rate faced
by a type $\theta$. But this tax rate is evaluated at the income
$y\left(\theta\right)$ that the agent earns given the (optimal) tax
schedule. This variable is endogenous, and is given as the implicit
solution to the first order condition of the agent, which obviously
does not give $y\left(\theta\right)$ as a function of $\theta$ in
closed form (even with the isoelastic functional form of the disutility
of labor). In other words, the Diamond formula gives the tax on the
type $\theta$. We are interested in the tax schedule in the space
of incomes. However, the relationship between the income and the type
is unknown. In fact, it is the essence of the nonlinear income tax
problem that the types are unobservable and the tax schedule is over
incomes. The Diamond and the Saez formulas are essentially identical
to each other and neither provides a full characterization of the
optimum.

\section{The optimal tax and the fairness principle\label{sec:The-optimal-tax}}

In this section, we provide a new property of the optimum -- the
fairness principle -- viewing the static optimal taxation problem
from a dynamical point of view. 

First, we rewrite the optimal static tax formula in an operator notation.
Consider a second order differential operator $L=\frac{\partial}{\partial y}\left(\sigma_{*}\left(y\right)\frac{\partial}{\partial y}\right)$,
where $\sigma_{*}\left(y\right)=\varepsilon_{*}\left(y\right)\phi_{*}\left(y\right).$
The optimal tax in (\ref{eq:OptimumTaxSchedule}) is then given by
\[
LT_{*}=-\phi_{*}.
\]

It is known that operators of this type can be associated with the
heat kernel (Grigor'yan 2009). We also show in Section \ref{sec:Gradient-flow}
that the heat equation arises from studying the gradient flow of the
tax reform, that is, the trajectory of the steepest increase in revenue.
The static optimal tax is a stationary point of such a dynamical system. 

For the analysis of the optimal tax we proceed as follows. We first
introduce the heat kernel and show that the optimal tax satisfies
a certain invariance property -- the fairness principle -- with
respect to this object. Specifically, the optimal tax at a given income
can be represented as the weighted (by the heat kernel) average of
the optimal taxes at the other incomes. We then characterize the form
of the heat kernel and show that it behaves as a Gaussian average
that proportionally downweights more distant incomes. This weighting
is indexed by time (or, maybe more intuitively, by scale) where averaging
is done over a broader set of incomes. Importantly, this weighting
is tightly connected to each other at every scale, thus providing
one unifying weighting scheme at every scale (or set of incomes).
Finally, we show a representation of the optimal marginal income tax
as well as the higher derivatives of the optimal tax schedule in terms
of the levels of the optimal taxes.

We first introduce the heat kernel. Let $q_{t}\left(x,y\right)$ be
the heat kernel given by the solution to the Kolmogorov forward equation
\begin{equation}
\frac{\partial}{\partial t}q_{t}\left(x,y\right)=\frac{\partial}{\partial y}\left(\varepsilon_{*}\left(y\right)\phi_{*}\left(y\right)\frac{\partial}{\partial y}q_{t}\left(x,y\right)\right),\label{eq:Heat Kernel q_t}
\end{equation}

and $\lim_{t\rightarrow0}q_{t}\left(x,y\right)=\delta\left(x-y\right)$,
where $\delta$ is a Dirac delta function.\footnote{The kernel $q_{t}\left(x,y\right)$ solves both the forward and the
backward Kolmogorov equation $\partial_{t}q_{t}\left(x,y\right)=L_{y}q_{t}\left(x,y\right)=L_{x}q_{t}\left(x,y\right).$} 

An example of the heat kernel is a Gaussian in Figure 1, we show later
in this section that the heat kernel satisfying equation (\ref{eq:Heat Kernel q_t})
behaves similarly to the Gaussian. The heat kernel $q_{t}\left(x,y\right)$,
for a given income $x$, is a function of two variables -- income
$y$ and time $t$. One can think of time as a different scale over
incomes. At time (scale) zero, the kernel puts the weight one on income
$x$. The larger times (scales) average and encompass a wider set
of incomes.

We now state and prove the new property of the optimal tax -- the
fairness principle.
\begin{prop}
\label{Prop: Fairness-principle-for}(Fairness principle for the optimum).
The optimal tax $T_{*}\left(y\right)$ is invariant under the heat
kernel $q_{t}\left(x,y\right)$ given by (\ref{eq:Heat Kernel q_t}),
for any $x\in Y$ and any $t>0$: 
\begin{equation}
T_{*}\left(x\right)=\int_{0}^{t}\int q_{s}\left(x,y\right)\phi_{*}\left(y\right)dyds+\int q_{t}\left(x,y\right)T_{*}\left(y\right)dy.\label{eq:Prop 1 invariance}
\end{equation}
\end{prop}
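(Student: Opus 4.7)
The plan is to recognize the right-hand side of (\ref{eq:Prop 1 invariance}) as a Duhamel-type representation that arises when one differentiates the heat semigroup applied to $T_*$ and uses the stationary equation $LT_*=-\phi_*$ recorded just before Proposition \ref{Prop: Fairness-principle-for} (which is simply the first-order condition (\ref{eq:FOC government}) of Corollary \ref{cor:The-optimal-tax} rewritten in divergence form). Concretely, I would set
$$u(t,x)=\int q_t(x,y)\,T_*(y)\,dy,$$
so that the initial condition $q_0(x,\cdot)=\delta_x$ gives $u(0,x)=T_*(x)$. The claim (\ref{eq:Prop 1 invariance}) then reduces to the identity $u(0,x)-u(t,x)=\int_0^t\!\int q_s(x,y)\phi_*(y)\,dy\,ds$, which is the standard variation-of-parameters formula for a non-homogeneous linear evolution equation whose source term is $-\phi_*$.

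The core computation is to show $\partial_t u(t,x)=-\int q_t(x,y)\phi_*(y)\,dy$. Differentiating under the integral sign and using the forward equation (\ref{eq:Heat Kernel q_t}) yields $\partial_t u(t,x)=\int L_y q_t(x,y)\,T_*(y)\,dy$. Since $L=\partial_y(\sigma_*\partial_y)$ is formally self-adjoint with respect to Lebesgue measure, two integrations by parts in $y$ move the generator from $q_t$ onto $T_*$:
$$\partial_t u(t,x)=\int q_t(x,y)\,L T_*(y)\,dy=-\int q_t(x,y)\phi_*(y)\,dy,$$
where the last equality uses $LT_*=-\phi_*$. Integrating in time from $0$ to $t$ and rearranging produces (\ref{eq:Prop 1 invariance}).

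The main obstacle is the justification of the integration by parts and of differentiation under the integral. The boundary contributions take the form $\sigma_*(y)\bigl[q_t(x,y)T_*'(y)-\partial_y q_t(x,y)\,T_*(y)\bigr]$ evaluated at the endpoints of $\mathbb{Y}$; these vanish because $\sigma_*=\varepsilon_*\phi_*$ is zero at the boundary, given the standing assumption that the density $h$ (hence $\phi_*$) vanishes at the boundary of the compact support, together with boundedness of $T_*$ and $T_*'$ up to the boundary (the latter being inherited from (\ref{eq:OptimumTaxSchedule})). For the differentiation under the integral one needs enough smoothness and decay of $q_t$ and $\partial_y q_t$, which is standard for the heat kernel of a uniformly elliptic divergence-form operator with coefficient $\sigma_*$ bounded away from zero on compact subintervals; the relevant analytic background is the one collected in Grigor'yan (2009) as cited by the paper. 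Modulo these verifications, the argument is really just the elementary Duhamel identity together with the stationarity $LT_*=-\phi_*$.
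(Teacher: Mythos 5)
Your proposal is correct and follows essentially the same route as the paper: differentiate $u(t,x)=\int q_t(x,y)T_*(y)\,dy$ in $t$, use the forward Kolmogorov equation, integrate by parts twice to transfer $L=\partial_y(\sigma_*\partial_y)$ onto $T_*$, invoke $LT_*=-\phi_*$, and integrate in time from $0$ to $t$. Your additional remarks on the vanishing boundary terms (via $\sigma_*=\varepsilon_*\phi_*\to 0$ at the endpoints) and on differentiating under the integral make explicit the justifications the paper leaves implicit.
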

\begin{proof}
Consider the derivative 
\[
\frac{\partial}{\partial t}\int q_{t}\left(x,y\right)T_{*}\left(y\right)dy=\int\frac{\partial}{\partial t}q_{t}\left(x,y\right)T_{*}\left(y\right)dy=
\]
by Kolmogorov forward equation
\[
=\int\frac{\partial}{\partial y}\left(\varepsilon_{*}\left(y\right)\phi_{*}\left(y\right)\frac{\partial}{\partial y}q_{t}\left(x,y\right)\right)T_{*}\left(y\right)dy=
\]
integrating twice by parts and rearranging
\[
=\int q_{t}(x,y)\frac{\partial}{\partial y}\left(\varepsilon_{*}\left(y\right)\phi_{*}\left(y\right)\frac{\partial}{\partial y}T_{*}\left(y\right)\right)dy=
\]
using equation (\ref{eq:FOC government})
\[
=-\int q_{t}\left(x,y\right)\phi_{*}\left(y\right)dy.
\]
Integrating the equation
\[
\frac{\partial}{\partial t}\int q_{t}\left(x,y\right)T_{*}\left(y\right)dy=-\int q_{t}\left(x,y\right)\phi_{*}\left(y\right)dy,
\]
we get
\[
\int q_{t}\left(x,y\right)T_{*}\left(y\right)dy=T_{*}\left(x\right)-\int_{0}^{t}\int q_{s}\left(x,y\right)\phi_{*}\left(y\right)dyds.
\]
\end{proof}
This proposition states that the optimal tax $T_{*}\left(x\right)$
is fair in the following sense -- the optimal tax at a given income
$x$ is equal to the weighted (by the heat kernel) average of taxes
at other incomes and income densities. The level of the optimal tax
at a given income $x$ is determined as the average of taxes at all
other incomes and at \textit{all} times (scales) $t$. We start by
focusing on the more interesting second integral, $\int q_{t}\left(x,y\right)T_{*}\left(y\right)dy$,
in (\ref{eq:Prop 1 invariance}). Since this term arises from the
behavioral effect of taxation it means that the planner smoothes the
distortions or the deadweight loss of taxes at every scale $t.$ The
averaging is done with the 1-parameter family of local averaging functions
$q_{t}\left(x,y\right)$, where each function corresponds to one particular
instance of the underlying fairness principle. In other words, the
optimal tax wants to ensure that an agent at a given income $x$ is
paying roughly the average of the amount of taxes paid by people working
just a little less or just a little more (``little less'' or ``little
more'' is determined by the heat kernel that downweights the more
distant incomes). This quantifies an underlying notion of fairness
that is in no way built into the system. The first integral represents
the mechanical effect of raising taxes and is a weighted average of
all densities of incomes $y$ at all times (scales) $t$.\footnote{One reason why this term is somewhat less central is that, for small
$t$, it is of the smaller order then the second integral term (see
Proposition \ref{Fairness-gradient} and footnote \ref{fn:Short-term optimal tax}).} Of course, our term fairness has nothing to do with the notions of
the social welfare function or the redistributional preferenes for
the government. In fact, the government here maximizes revenues. Yet,
the most efficient way to raise revenue is to do it fairly in the
sense of equating (with certain weights) the tax at any income to
the taxes of incomes at all scales. 

The key to our results is to view the static optimal tax from the
dynamic point of view by associating the heat kernel $q_{t}\left(x,y\right)$
with the operator $L$ determining the optimal tax. The optimal tax
$T_{*}\left(x\right)$ is of course time-independent. Yet, one can
think of it as being invariant under a dynamic system that starts
from this tax and applies the heat kernel $q_{t}\left(x,y\right)$
to it. At time $0,$ the optimal tax is just equal to itself -- thus
being trivially fair. As time goes on, the heat kernel shows that
the optimal tax is fair in the sense that it is equal to the average
over an increasingly wider distribution $q_{t}\left(x,y\right)$.
One can alternatively think of the variable $t$ not as time but as
a scale -- and the tax being fair at each scale $t$ encompassing
the weight of more and more incomes. We expand on the dynamic system
interpretation of the optimum in the next section where the optimum
arises as a stationary point of a gradient flow.

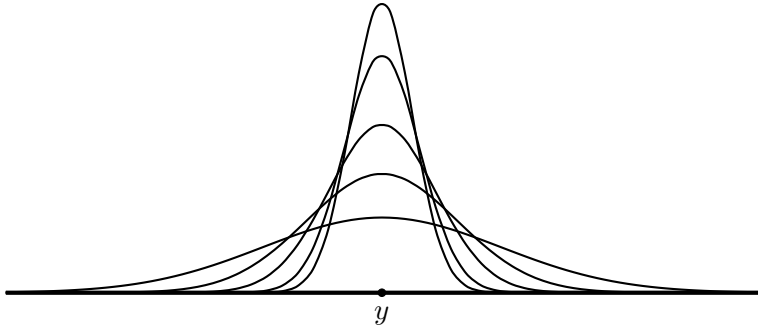
\begin{figure}[!h]
\begin{tikzpicture}
\draw [ultra thick] (0,0) -- (10,0);
\node at (-3,0) {};
\node at (5, -0.3) {$y$};
\filldraw (5,0) circle (0.05cm);
 \draw[scale=1,thick, samples=50,domain=0:10,smooth,variable=\x]  plot ({\x},{(5/sqrt(5/0.2))*exp(-0.2*(\x-5)*(\x-5))});
 \draw[scale=1,thick, samples=50,domain=0:10,smooth,variable=\x]  plot ({\x},{(5/sqrt(5/0.5))*exp(-0.5*(\x-5)*(\x-5))});
 \draw[scale=1,thick, samples=50,domain=0:10,smooth,variable=\x]  plot ({\x},{(5/sqrt(5))*exp(-(\x-5)*(\x-5))});
 \draw[scale=1,thick, samples=50,domain=0:10,smooth,variable=\x]  plot ({\x},{(5/sqrt(5/2))*exp(-2*(\x-5)*(\x-5))});
 \draw[scale=1,thick, samples=50,domain=0:10,smooth,variable=\x]  plot ({\x},{(5/sqrt(5/3))*exp(-3*(\x-5)*(\x-5))});
\end{tikzpicture} \caption{The heat kernel $q_{t}(x,y)$ for various times $t$ (smaller times
correspond to larger maxima).}
\end{figure}
The precise form of the fairness principle depends on the precise
form of $q_{t}$. We now turn to the more detailed characterization
of the heat kernel. 
\begin{prop}
(1) The heat kernel (\ref{eq:Heat Kernel q_t}) satisfies a Gaussian
upper bound, for any $t>0$: 
\[
q_{t}\left(x,y\right)\leq\frac{c_{1}}{\sqrt{t}}\exp\left(c_{2}t\right)\exp\left(-c_{3}\frac{(x-y)^{2}}{t}\right),
\]
for some positive constants $c_{1},$ $c_{2}$, $c_{3}$;

(2) For $t\rightarrow0,$
\[
q_{t}\left(x,y\right)\sim\frac{1}{\sqrt{4\pi\sigma_{*}\left(x\right)t}}\exp\left(-\frac{\left(y-x-\sigma_{*}'\left(x\right)t\right)^{2}}{4\sigma_{*}\left(x\right)t}\right),
\]
where $\sigma_{*}\left(x\right)=\varepsilon_{*}\left(x\right)\phi_{*}\left(x\right)$;

(3) For all point $x,y$ and all times $t,s>0$, the heat kernel satisfies
the semigroup property:
\[
q_{t+s}(x,y)=\int q_{t}(x,z)q_{s}(z,y)dz.
\]
\end{prop}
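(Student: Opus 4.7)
The plan is to treat the three parts separately, exploiting that by the standing assumptions on $\varepsilon_*$ and $\phi_*$, the coefficient $\sigma_* = \varepsilon_*\phi_*$ is continuous and bounded above and below by positive constants on the working interval, so $L = \partial_y(\sigma_*(y)\partial_y)$ is a uniformly elliptic second-order operator in divergence form. This puts us squarely in the regime of classical parabolic theory, and each of the three parts reduces to a well-known technique applied to $L$.

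For part (1), I would invoke the Aronson-type heat-kernel bounds for uniformly elliptic divergence-form operators on $\mathbb{R}$ (as in Grigor'yan 2009, already cited by the paper). Rewriting $L$ as $\sigma_*(y)\partial_{yy} + \sigma_*'(y)\partial_y$ displays it as a uniformly elliptic operator with a bounded first-order term; the standard Aronson / Nash--Moser estimate then yields exactly the stated bound, with the factor $e^{c_2 t}$ absorbing the contribution of the first-order coefficient. The Davies perturbation method gives the same conclusion.

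For part (2), I would use the classical freezing-of-coefficients parametrix construction. Near the base point $x$, the operator $L$ agrees to leading order with the constant-coefficient operator $L_x := \sigma_*(x)\partial_{yy} + \sigma_*'(x)\partial_y$, whose heat kernel is explicit: it is the transition density of a Brownian motion with constant drift $\sigma_*'(x)$ and constant diffusion coefficient $\sqrt{2\sigma_*(x)}$, namely
\[
p_t^{(x)}(y) = \frac{1}{\sqrt{4\pi\sigma_*(x)t}}\exp\left(-\frac{(y-x-\sigma_*'(x)t)^2}{4\sigma_*(x)t}\right),
\]
which is precisely the claimed leading order. The remaining step is to show that $q_t - p_t^{(x)}$ is of strictly smaller order as $t\to 0^+$: one writes the Duhamel formula against the error operator $L - L_x$, whose coefficients vanish at $x$, and uses the Aronson bound from (1) together with $|y - x| = O(\sqrt{t})$ in the relevant Gaussian window. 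Equivalently, the SDE associated with $L$ is $dX_s = \sigma_*'(X_s)ds + \sqrt{2\sigma_*(X_s)}dW_s$, and an Ito--Taylor expansion about $X_0 = x$ yields the same leading Gaussian by direct computation of the limiting mean and variance.

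For part (3), I would use uniqueness for the parabolic Cauchy problem. Fix $y$ and $s > 0$, and define $u(t,x) := q_{t+s}(x,y)$ and $v(t,x) := \int q_t(x,z)\,q_s(z,y)\,dz$. By the backward Kolmogorov equation noted in the footnote to (\ref{eq:Heat Kernel q_t}), $\partial_t q_t(x,y) = L_x q_t(x,y)$, so $u$ solves $\partial_t u = L_x u$ for $t > 0$; differentiating under the integral sign in $v$ (justified by the Gaussian bound of (1)) and applying the same backward equation to $q_t(x,z)$ shows $v$ solves the identical equation. As $t \to 0^+$, $u(t,x) \to q_s(x,y)$ by continuity of the kernel, while $v(t,x) \to q_s(x,y)$ by the delta-function initial condition on $q_t$. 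Uniqueness of bounded solutions in the Gaussian-bounded class then forces $u \equiv v$, which is the semigroup property. I expect the main obstacle to be the rigorous justification of (2) with the correct Gaussian prefactor and drift, since the parametrix argument requires one to control Duhamel error terms uniformly in the window $|y - x| = O(\sqrt{t})$ and to recover $\sigma_*'(x)$ with the correct sign and coefficient; by contrast (1) and (3) are essentially textbook consequences of Aronson's estimate and Chapman--Kolmogorov via parabolic uniqueness.
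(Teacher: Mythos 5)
Your proposal is correct and follows essentially the same route as the paper, which offers no self-contained proof but instead cites Aronson (1968) and Metafune--Ouhabaz--Pallara (2011) for the Gaussian upper bound, Varadhan (1967)/Molchanov (1975)/Grigor'yan (2009) for the short-time asymptotics, and the standard semigroup property of heat kernels for part (3). Your sketch simply fills in the textbook arguments (Aronson/Nash--Moser estimates, freezing of coefficients with the drift $\sigma_*'(x)$ recovered from the divergence form, and Chapman--Kolmogorov via parabolic uniqueness) behind those citations, under the same uniform-ellipticity caveat on compact subintervals that the paper's standing assumptions impose.
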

This proposition shows that the heat kernel essentially behaves similarly
to the properly scaled Gaussian average.\footnote{In one dimension that we have, the body of literature on parabolic
PDEs shows that for almost any modification of the problem, the heat
kernel looks and behaves exactly as a Gaussian -- this is true for
the heat kernels on arbitrary manifolds, for very wide classes of
conductivities $\sigma$, and for a very broad range of spaces (see
e.g, Grigor'yan (2009) or Bogachev, Krylov, Röckner, and Shaposhnikov
(2015) for extensive reviews).} The first part of the proposition shows that the heat kernel satisfies
a Gaussian upper bound for all times $t$. The second part of the
proposition shows that for short $t$, the heat kernel is exactly
the Gaussian. The third part of the proposition shows a fact of crucial
importance -- the heat kernels are tightly linked in at all time
scales. We now provide a more extensive discussion of these results.

The heat kernel $q_{t}$ is in general a very benign object that is
fairly easy to compute to any desired degree of accuracy. Three basic
properties are (1) $q_{t}(x,y)\geq0$, (2) $q_{t}(x,y)=q_{t}(y,x)$
and (3) preservation of integral mass 
\[
\int q_{t}(x,y)dy=1.
\]
We can thus, for a given point $x$, understand $q_{t}(x,y)$ as a
one-parameter family of probability distributions in the variable
$y$. This motivates understanding them as averaging objects. A classical
result of Aronson (1968) is that the heat kernel $q_{t}(x,y)$ on
a general manifold $M$ (satisfying very mild regularity assumptions)
satisfies what is called a Gaussian upper bound 
\[
q_{t}(x,y)\leq\frac{c_{1}}{t^{n/2}}\exp\left(-\frac{d(x,y)^{2}}{c_{2}t}\right),\quad\forall t>0,x,y\in M,
\]
where $d(x,y)$ is the geodesic distance between $x$ and $y$, and
$c_{1}$ and $c_{2}$ are positive constants. In particular, while
the heat kernel $q_{t}(x,y)$ may no longer look like a Gaussian centered
at $y$ having variance $t$, it certainly has the same decay behavior.
That is, it acts as a local averaging operator at scale $d(x,y)\sim\sqrt{t}$.
In other ways, the fairness principle averages the nearby income,
where the nearby is given by the scale $\sqrt{t}$.\footnote{If one is interested in the higher order expansions, those can be
straightforwardly derived in closed form to any order using the parametrix
method which represents the heat kernel as the sum of the Gaussian
and the higher order corrections (see, e.g., Friedman (2008)).} In the proposition, we use a slightly more general result in Metafune,
Ouhabaz, and Pallara (2011).

Finally, the classical results for the short-time asymptotics (see,
e.g., Varadhan (1967), Molchanov (1975), and Grigor'yan (2009)) imply
that for $t\rightarrow0,$ the heat kernel is the Gaussian with the
scale determined by the conductivity parameter $\varepsilon_{*}\left(x\right)\phi_{*}\left(x\right)$.
In the next section, we provide additional results and intuition for
this small time asymptotics.

Returning to the interpretation of $q_{t}$ as creating an averaging
operator at scale $\sim\sqrt{t}$, the third part of the proposition
shows the fairness principles for different scales are linked. A fact
of crucial importance is that they are tightly linked via what is
known as the semigroup property (see, e.g., Grigor'yan 2009). This
shows that the behavior of $q_{t}$ is tightly linked to both past
and future behavior of the heat kernel and is thus far from arbitrary.
That is, there is one unified weighting scheme at all income scales.

We now obtain from Proposition \ref{Prop: Fairness-principle-for}
the corresponding representation for the marginal tax. 
\begin{cor}
The fairness principle is invariant under differentiation and implies
that the marginal tax is given by 
\[
\frac{\partial}{\partial x}T_{*}\left(x\right)=\int_{0}^{t}\int\left(\frac{\partial}{\partial x}q_{s}\left(x,y\right)\right)\phi_{*}\left(y\right)dyds+\int\left(\frac{\partial}{\partial x}q_{t}\left(x,y\right)\right)T_{*}\left(y\right)dy,
\]
for any $x\in Y$ and any $t>0$
\end{cor}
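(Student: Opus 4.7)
The plan is to obtain this corollary as a direct consequence of Proposition \ref{Prop: Fairness-principle-for} by differentiating the identity
\[
T_{*}(x)=\int_{0}^{t}\int q_{s}(x,y)\phi_{*}(y)\,dy\,ds+\int q_{t}(x,y)T_{*}(y)\,dy
\]
with respect to $x$. Since the left-hand side depends on $x$ only through $T_{*}(x)$, the statement would follow once we justify passing $\partial/\partial x$ inside both integrals on the right-hand side.

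First, for the second term $\int q_{t}(x,y)T_{*}(y)\,dy$, I would apply the Leibniz rule to move the $x$-derivative inside the integral. The justification is standard: by classical parabolic regularity theory (cf.\ the references to Grigor'yan (2009) cited in the paper), the heat kernel $q_{t}(x,y)$ associated with the uniformly elliptic operator $L=\partial_{y}(\sigma_{*}(y)\partial_{y})$ is smooth in $x$ for each fixed $t>0$, and its $x$-derivatives satisfy Gaussian-type pointwise estimates analogous to Aronson's. Combined with the boundedness of $T_{*}$ on the compact income set $\mathbb{Y}$, this provides an integrable dominating function, allowing differentiation under the integral sign.

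For the first, iterated integral, the same argument works for the inner integral at each fixed $s>0$, yielding $\partial_{x}\int q_{s}(x,y)\phi_{*}(y)\,dy = \int \partial_{x}q_{s}(x,y)\phi_{*}(y)\,dy$. The outer differentiation in $t$ (which does not appear here, since we differentiate in $x$, not $t$) is irrelevant, but I still need to justify exchanging $\partial_{x}$ with the $s$-integration on $[0,t]$. The delicate point is integrability near $s=0$: the derivative $\partial_{x}q_{s}(x,y)$ blows up like $s^{-1}$ pointwise, but once integrated against the bounded, continuous density $\phi_{*}$, the singularity is smoothed out, and one gets an $s$-integrable bound on $\int \partial_{x}q_{s}(x,y)\phi_{*}(y)\,dy$ on $[0,t]$. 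Combining the two interchanges then yields the formula.

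The main obstacle is purely technical: establishing the Gaussian-type estimate on $\partial_{x}q_{t}(x,y)$ and verifying the integrability near $s=0$ needed for the first term. These are classical results for heat kernels of second-order uniformly elliptic operators in one dimension, so I would invoke them rather than re-derive them. Once this is done, the corollary is essentially an application of the Leibniz rule to Proposition \ref{Prop: Fairness-principle-for}, and hence the phrase \emph{invariant under differentiation} in the statement.
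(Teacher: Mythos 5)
Your proposal follows essentially the same route as the paper, which simply differentiates the identity of Proposition \ref{Prop: Fairness-principle-for} with respect to $x$ and moves the derivative inside the integrals. The extra care you take in justifying the interchange (Gaussian gradient estimates on $\partial_{x}q_{t}$ and the integrability of the $s^{-1/2}$-type singularity near $s=0$) is correct and in fact supplies detail the paper leaves implicit.
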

The relevant quantity, $\partial_{x}q_{t}(x,y)$, is quite simple
to understand in one dimension: since $q_{t}(x,y)$ is a probability
distribution, its derivative has total integral 0. This means that
$\partial_{x}q_{t}(x,y)$ has a positive part and a negative part
with the same total $L^{1}-$mass and acts as a discrete differentiation
operator. We plot it in Figure 2 (together with the higher derivatives
of $q_{t}(x,y)$). In economic terms, the integral evaluates a weighted
average of taxes paid by individuals with slightly higher incomes,
subtracts a weighted average of taxes for individuals with slightly
lower incomes, and this results in the quantity determining the size
of the marginal tax $T'_{*}(x)$. Moreover, this fairness principle
for the marginal tax holds, as the original fairness principle, for
all $t>0$. 

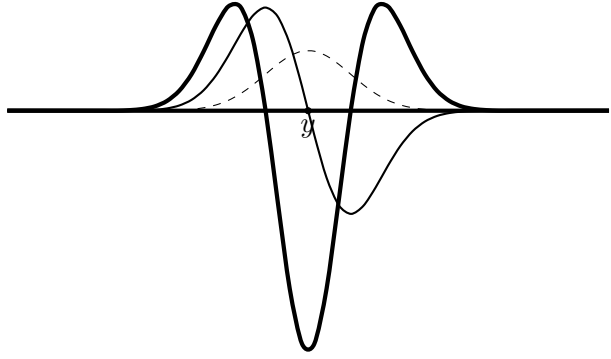
\begin{figure}[h!]
\begin{tikzpicture}[scale=0.8]
\draw [ultra thick] (0,0) -- (10,0);
\node at (-5,0) {};
\node at (5, -0.3) {$y$};
\filldraw (5,0) circle (0.05cm);
 \draw[scale=1,dashed, samples=50,domain=0:10,smooth,variable=\x]  plot ({\x},{exp(-(\x-5)*(\x-5))});
 \draw[scale=1,thick, samples=50,domain=0:10,smooth,variable=\x]  plot ({\x},{-(\x-5)*(4)*exp(-(\x-5)*(\x-5))});
 \draw[scale=1,ultra thick, samples=50,domain=0:10,smooth,variable=\x]  plot ({\x},{8*(\x-5)*(\x-5)*exp(-(\x-5)*(\x-5)) - 4*exp(-(\x-5)*(\x-5)) });
\end{tikzpicture} \caption{$q_{t}$ (dashed), its derivative $q_{t}'$ and its second derivative
(bold) $q_{t}''$.}
\end{figure}
It is an elementary mathematical fact that given any function $f\in C^{1}(\mathbb{R})$,
it is possible to change it ever so slightly into a function $f_{2}\in C^{1}(\mathbb{R})$
such that $f_{1}$ and $f_{2}$ give almost the same values everywhere
\[
\max_{x\in\mathbb{R}}{|f_{1}(x)-f_{2}(x)|}\leq\varepsilon
\]
but $f_{2}$ has a very different derivative 
\[
\max_{x\in\mathbb{R}}{|f_{1}'(x)-f_{2}'(x)|}\geq\frac{1}{\varepsilon}.
\]
Put differently, even a very good understanding of the optimal tax
code $T_{*}$ need not a priori translate into a good understanding
of the marginal tax $T_{*}'$. We show here that this is not the case,
the marginal tax is uniquely determined by the global behavior of
the optimal tax code.

One way of interpreting this statement is as follows: for any function
$f\in C^{1}$, $f$ at a point is the local average of its neighboring
values (indeed, this follows from continuity and boundedness and does
not require differentiability) 
\[
\lim_{t\rightarrow0^{+}}\int_{}{q_{t}(x,y)f(y)dy}=f(x).
\]
The fairness principle states that the optimal tax code satisfies
a much stronger relationship 
\[
T_{*}\left(x\right)=\int_{0}^{t}\int q_{s}\left(x,y\right)\phi_{*}\left(y\right)dyds+\int q_{t}\left(x,y\right)T_{*}\left(y\right)dy
\]
for \textit{all} times $t>0$. We can now see whether there is an
analogous result for the derivative. Differentiating the fairness
principle yields 
\[
\frac{\partial}{\partial x}T_{*}\left(x\right)=\int_{0}^{t}\int\left(\frac{\partial}{\partial x}q_{s}\left(x,y\right)\right)\phi_{*}\left(y\right)dyds+\int\left(\frac{\partial}{\partial x}q_{t}\left(x,y\right)\right)T_{*}\left(y\right)dy
\]
and it is of interest to understand whether there is an analogous
result for \textit{all} functions $f$. We perform the relevant computations,
for simplicity of exposition, for the heat kernel of the Laplacian
on $\mathbb{R}$, i.e. 
\[
q_{t}(x,y)=\frac{1}{\sqrt{4\pi t}}e^{-\frac{|x-y|^{2}}{4t}}.
\]
A simple computation shows that 
\[
\frac{\partial}{\partial x}q_{t}\left(x,y\right)=-\frac{(x-y)}{4\sqrt{\pi}t^{3/2}}e^{-\frac{|x-y|^{2}}{4t}}
\]
and from this we see that 
\[
\lim_{t\rightarrow0^{+}}\int\left(\frac{\partial}{\partial x}q_{t}\left(x,y\right)\right)f\left(y\right)dy=f'(x)
\]
This follows easily from a Taylor expansion of $f$ around $x$. While
this relationship is true for all continuous $f\in C^{1}(\mathbb{R})$,
it only holds in the limit $t\rightarrow0$. In contrast, as shown
by the derivative of the fairness principle, the optimal tax code
$T_{*}$ satisfies a relationship of this type for \textit{all} $t>0$.
The same principle holds for higher derivatives, and these results
can be obtained in the same manner.

Summarizing, in this section we derive a new characterization of the
optimal tax. The fairness principle, while still of course not a closed-form
solution, provides a new set of insights on the nature of the optimal
tax.

\section{\label{sec:Gradient-flow}Gradient flows of taxes}

We now turn to the analysis of taxes from a different point of view.
We construct a dynamic system, a gradient flow, which starts at any
(optimal or suboptimal) tax function and then changes the tax system
in the direction of the increased revenues. The optimal tax is a stationary
point of this system. 

In this section, we use the straightforward adaptation of notation
in Section \ref{sec:Environment} to index the relevant variables
by time. 

We start by formally defining the gradient flow. 
\begin{defn}
\label{def:Gradient flow}For all $t\geq0$ and $y\in\mathbb{Y}$,
the \emph{gradient flow} of the government revenue functional $R(T_{t})$
is defined as the dynamical system: 
\begin{eqnarray}
\frac{\partial T_{t}\left(y\right)}{\partial t} & = & \phi_{t}\left(y\right)+\frac{\partial}{\partial y}[T_{t}^{\prime}\left(y\right)\varepsilon_{t}\left(y\right)\phi_{t}\left(y\right)],\label{eq:Heat PDE}
\end{eqnarray}
where $\phi_{t}$ is governed by $T_{t}$ according to the change
of variables $\phi_{t}\left(y\left(\theta\right)\right)=(y_{t}^{\prime}\left(\theta\right))^{-1}h\left(\theta\right)$
and by equation (\ref{eq:first order condition}).\footnote{If the government is social welfare maximizing, it is immediate to
show that the gradient flow has the form of $\frac{\partial T_{t}\left(y\right)}{\partial t}=\phi_{t}\left(y\right)\left(1-\gamma_{t}\left(y\right)\right)+\frac{\partial}{\partial y}[T_{t}^{\prime}\left(y\right)\varepsilon_{t}\left(y\right)\phi_{t}\left(y\right)],$
where $\gamma_{t}\left(y\right)$ is the social marginal utility of
income (see Diamond (1975)).} 
\end{defn}

\subsection{\label{subsec:Mathematical foundations}Mathematical foundations}

\paragraph{Finite-dimensional spaces.}

Gradient flows are natural mathematical objects attached to functions
or functionals, mapping to real numbers. For simplicity, start with
a differentiable function $V:\mathbb{R}^{n}\rightarrow\mathbb{R}$
and define the gradient flow as a curve $x:\left[0,\infty\right]\rightarrow\mathbb{R}^{n}$
starting at some point $x_{0}\in\mathbb{R}^{n}$ with the property
that the curve always flows in the direction of steepest descent of
$V$. Intuitively, this direction is determined by the gradient of
$V$. Formally, we want to choose the vector $\hat{x}\in\mathbb{R}^{n}$
with $\Vert\hat{x}\Vert_{\ell^{2}}=1$ that minimizes 
\begin{eqnarray*}
\lim_{\mu\rightarrow0}\,\frac{1}{\mu}[\,V(x+\mu\hat{x})-V\left(x\right)\,] & = & \langle\,\nabla V\,,\,\hat{x}\,\rangle_{\ell^{2}},
\end{eqnarray*}
where the equality follows from the definition of the gradient. This
gives rise to an ordinary differential equation that describes the
law of motion of $x_{t}\in\mathbb{R}^{n}$ for $t\geq0$: 
\begin{eqnarray*}
\frac{d}{dt}x_{t} & = & -\,\nabla\,V(x_{t}),
\end{eqnarray*}
with the property that $V$ is decreasing along the flow of $x$ since
\[
\frac{d}{dt}V(x_{t})=\langle\nabla V,\frac{d}{dt}x_{t}\rangle=-\Vert\nabla V(x_{t})\Vert^{2}<0.
\]
While this model is rather classical and the existence and uniqueness
properties of the solution are well known, understanding the actual
dynamical behavior can pose considerable challenges (recent examples
being given by Tao (2017), Steinerberger (2018)).

\paragraph{Infinite-dimensional spaces.}

The very same principle can be applied in settings where the underlying
domain is not finite-dimensional but instead given by the space of
functions. We illustrate this with a representative example. We may
define a functional $\Psi$ by assigning to any twice-differentiable
\emph{function} $f\in\mathcal{C}^{2}\left(\mathbb{R},\mathbb{R}\right)$,
the number 
\begin{eqnarray*}
V\left(f\right) & = & \frac{1}{2}\int_{\mathbb{R}}\,|f^{\prime}\left(x\right)|^{2}\,\text{d}x.
\end{eqnarray*}
It is easy to show that the Gateaux differential of $V$ in the direction
$\hat{f}$ is given by $\delta V(f,\hat{f})=\int_{\mathbb{R}}f^{\prime}\left(x\right)\hat{f}^{\prime}\left(x\right)\text{d}x$.
An integration by parts implies that $\delta V(f,\hat{f})=-\int_{\mathbb{R}}f^{\prime\prime}\left(x\right)\hat{f}\left(x\right)\text{d}x$.
More generally, for any function $f\in\mathcal{C}^{2}\left(\mathbb{R}^{n},\mathbb{R}\right)$,
let $V\left(f\right)=\int_{\mathbb{R}^{n}}|\nabla f|^{2}$. We then
have $\delta V(f,\hat{f})=\int_{\mathbb{R}^{n}}\nabla f\cdot\nabla\hat{f}$.
By Green's first identity, this can be represented as a functional
$\hat{f}\mapsto\langle-\Delta f,\hat{f}\rangle_{L^{2}}$, where $\Delta$
denotes the Laplace operator, thus recovering the same structure as
above. That is, in order to flow in the direction of steepest descent
of the functional $V$, we must set 
\begin{eqnarray*}
\frac{\partial}{\partial t}f_{t} & = & \Delta\,f_{t}.
\end{eqnarray*}
This gives rise to a law of motion for the function $f$ characterized
by a parabolic PDE (namely, a heat equation). Needless to say, even
showing that all of these operations remain valid for any time $t>0$
is a difficult task, the theory of partial differential equations
being substantially more challenging than that of ordinary differential
equations.

\subsection{The gradient flow of taxes}

We now turn to the formal derivation of the the gradient flow (\ref{eq:Heat PDE}).
An integration by parts in the second integral of equation (\ref{eq:GateauxDiff_SocialWelfare-1})
implies that the impact of the tax reform $\hat{T}_{t}$ on government
revenue can equivalently be rewritten as 
\begin{eqnarray}
\delta R(T_{t},\hat{T}_{t}) & = & \int_{\mathbb{Y}}\Lambda_{t}\left(y\right)\hat{T}_{t}\left(y\right)dy,\text{ with }\Lambda_{t}\equiv\phi_{t}+\frac{\partial}{\partial y}[T_{t}^{\prime}\varepsilon_{t}\phi_{t}]\label{eq:Integration by parts Revenue}
\end{eqnarray}
Assuming that the space of functions $\mathcal{C}^{2}\left(\mathbb{R},\mathbb{R}\right)$
is endowed with the $L^{2}$ norm $\Vert T\Vert^{2}=\int\left(T\left(y\right)\right)^{2}dy$,
this can be expressed as $\langle\Lambda_{t},\hat{T}_{t}\rangle$.\footnote{We could have considered the weighted-$L^{2}$ norm $\Vert T\Vert^{2}=\int\kappa_{t}\left(y\right)\left(T\left(y\right)\right)^{2}\text{d}y$,
for some weights $\kappa_{t}\left(y\right)$ and this expression can
be represented as $\langle\kappa_{t}^{-1}\Lambda_{t},\hat{T}_{t}\rangle$
with the resulting gradient flow $\frac{\partial T_{t}\left(y\right)}{\partial t}=\left(\kappa_{t}\left(y\right)\right)^{-1}\phi\left(y\right)+\left(\kappa_{t}\left(y\right)\right)^{-1}\frac{\partial}{\partial y}[T_{t}^{\prime}\left(y\right)\varepsilon_{t}\left(y\right)\phi_{t}\left(y\right)]$.
The analysis for this case is identical. There is a re-interpretation
of such a weight as simply changing the metric of the underlying manifold
$\mathbb{R}$: put differently, one can interpret everything as an
equal-weight problem on a curved geometry; heat and associated processes
are not very sensitive to ``curving'' (heat propagates on a plane
and on a sphere in roughly the same sense). See, e.g., Taylor (1996),
Hörmander (2003), and Grigor'yan (2009). } Therefore, the gradient flow we obtain in this case can be written
as the dynamical system (\ref{eq:Heat PDE}).

The gradient flow (\ref{eq:Heat PDE}) can be equivalently derived
as the solution to the problem of choosing the trajectory of the tax
schedule $t\mapsto T_{t}$ that maximizes at each instant $t$ the
increase in government revenue: 
\begin{eqnarray*}
 & \underset{T_{t}}{\max} & \frac{\partial}{\partial t}\,R(T_{t}),
\end{eqnarray*}
in the $L^{2}$ norm. Specifically, the evolution of government revenue
$R(T_{t})$ over time for a given trajectory $T_{t}$ is given by
\begin{eqnarray*}
\frac{\partial}{\partial t}R(T_{t}) & = & \frac{\partial}{\partial t}\int\,T_{t}(y_{t}\left(\theta\right))\,\text{d}H\left(\theta\right)\ \,=\int\left[\frac{\partial T_{t}}{\partial t}(y_{t}\left(\theta\right))+T_{t}^{\prime}\left(y_{t}\left(\theta\right)\right)\frac{\partial y_{t}}{\partial t}\left(\theta\right)\right]dH\left(\theta\right).
\end{eqnarray*}
Imposing that the individual's first-order condition remains satisfied
over time requires (by differentiation of (\ref{eq:first order condition}))
that 
\[
\frac{\partial y_{t}}{\partial t}\left(\theta\right)=-\varepsilon_{t}\left(\theta\right)\frac{\partial T_{t}^{\prime}(y_{t}\left(\theta\right))}{\partial t}.
\]
That is, at each instant, individual $\theta$ adjusts his income
in the opposite direction and proportionally to the change in the
marginal tax rate that he faces.

Plugging this equation back into the law of motion of government revenue
and integrating the second term by parts leads to 
\begin{eqnarray*}
\frac{\partial}{\partial t}R(T_{t}) & = & \int\,\frac{\partial T_{t}}{\partial t}\,\{\,\phi_{t}(y)+\frac{\partial}{\partial y}[\,T_{t}^{\prime}(y)\,\varepsilon_{t}\left(y\right)\,\phi_{t}(y)\,]\,\}\,\text{d}y\ \,=\left\langle \Lambda_{t}\left(y\right),\frac{\partial T_{t}\left(y\right)}{\partial t}\right\rangle .
\end{eqnarray*}
This expression is maximized when $\frac{\partial T_{t}}{\partial t}(y)=\,\phi_{t}(y)+\frac{\partial}{\partial y}[T_{t}^{\prime}(y)\varepsilon_{t}\left(y\right)\phi_{t}(y)]$,
thus leading to (\ref{eq:Heat PDE}).

For another way to understand the economic meaning of (\ref{eq:Heat PDE}),
consider the problem of choosing the tax reform $\hat{T}$ that maximizes
the increase in government revenue, subject to the following quadratic
cost of reforming the tax payments:\footnote{Again, we can use weights $\kappa\left(y\right)>0$ in the cost function
(e.g., they can be equal to the density function $\phi_{t}\left(y\right)$)
and derive essentially the same results.} 
\begin{eqnarray*}
 & \underset{\hat{T}}{\max} & \delta R(T_{t},\hat{T})\,-\,\frac{1}{2}\int_{\mathbb{Y}}(\hat{T}\left(y\right))^{2}\,\text{d}y.
\end{eqnarray*}
Using the representation (\ref{eq:Integration by parts Revenue}),
the solution is given by 
\begin{eqnarray*}
\hat{T}_{t}\left(y\right) & = & \Lambda_{t}\left(y\right).
\end{eqnarray*}
Now, the law of motion of the tax schedule in the small interval of
time $\delta t\rightarrow0$ is given by $T_{t+\delta t}\left(y\right)=T_{t}\left(y\right)+\hat{T}_{t}\left(y\right)\delta t$,
or $\hat{T}_{t}\left(y\right)=\frac{\partial T_{t}(y)}{\partial t}$.
We therefore obtain the gradient flow (\ref{eq:Heat PDE}).

\subsection{\label{subsec:Short-term-evolution-and}Short-term evolution and
fairness property}

In this section, we describe the short-term evolution of the tax schedule
$T_{\tilde{t}}$ under the gradient flow by solving the heat equation
(\ref{eq:Heat PDE}) over a short time interval $\left[t,\tilde{t}\right]$.
We derive a version of the fairness principle that now applies to
any point on the path of evolution of the tax system.
\begin{prop}
\label{Fairness-gradient}(Fairness principle for the gradient flow).
Consider any initial time $t$ with the corresponding tax profile
$T_{t}\left(y\right)$, density of incomes $\phi_{t}\left(y\right)$,
elasticity $\varepsilon_{t}\left(y\right)$, and conductivity $\sigma_{t}\left(y\right)=\phi_{t}\left(y\right)$$\varepsilon_{t}\left(y\right)$.
Then, for small $\tilde{t}$, the tax $T_{\tilde{t}}\left(y\right)$,
generated by the gradient flow (\ref{eq:Heat PDE}), is given by a
weighted Gaussian average of the incomes: 
\[
T_{\tilde{t}}\left(x\right)\sim\left(\tilde{t}-t\right)\phi_{t}\left(x\right)+\int_{\underline{y}}^{\bar{y}}q_{t,\tilde{t}}\left(x,y\right)T_{t}(y)dy,
\]

where $q_{t,\tilde{t}}\left(x,y\right)=\frac{1}{\sqrt{4\pi\sigma_{t}\left(x\right)\left(\tilde{t}-t\right)}}\exp\left(-\frac{\left(y-x-\sigma_{t}'\left(x\right)\left(\tilde{t}-t\right)\right)^{2}}{4\sigma_{t}\left(x\right)\left(\tilde{t}-t\right)}\right)$.
\end{prop}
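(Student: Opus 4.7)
The plan is to treat the gradient flow (\ref{eq:Heat PDE}) over the short interval $[t,\tilde{t}]$ as an inhomogeneous linear parabolic equation with frozen coefficients, then apply Duhamel's principle and invoke the short-time Gaussian asymptotics of the associated heat kernel that were already established in part (2) of the heat kernel proposition.

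First I would implement the operator-splitting idea that is discussed informally in the introduction. Over $[t,\tilde{t}]$ with $\tilde{t}-t$ small, I freeze $\phi$ and $\varepsilon$ (hence $\sigma = \varepsilon\phi$) at their time-$t$ values, so the flow reads
\[
\frac{\partial T_s(y)}{\partial s} \;=\; \phi_t(y) + \frac{\partial}{\partial y}\!\bigl[\sigma_t(y)\, T_s'(y)\bigr],\qquad s\in[t,\tilde{t}],
\]
with initial datum $T_t$. The key point is that $\phi_s$ and $\varepsilon_s$ are continuous in $s$ (they depend on $T_s$ through the agents' first-order condition), so replacing them by $\phi_t,\varepsilon_t$ produces an error that is of lower order than the leading terms we want to capture; this is exactly the step where operator splitting is justified and is where I would be most careful.

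Next I would solve the frozen equation by Duhamel's formula. Let $q_{t,s}(x,y)$ denote the fundamental solution of $\partial_s u = \partial_y(\sigma_t \partial_y u)$ with $q_{t,t}(x,y)=\delta(x-y)$; this is exactly the heat kernel of the time-$t$ operator studied in the previous section. Then
\[
T_{\tilde{t}}(x) \;=\; \int q_{t,\tilde{t}}(x,y)\,T_t(y)\,dy \;+\; \int_t^{\tilde{t}} \!\int q_{s,\tilde{t}}(x,y)\,\phi_t(y)\,dy\,ds.
\]
Using $\int q_{s,\tilde{t}}(x,y)\,dy=1$ and the continuity of $\phi_t$, the source contribution is to leading order $\int_t^{\tilde{t}} \phi_t(x)\,ds = (\tilde{t}-t)\,\phi_t(x)$, with a correction of size $O((\tilde{t}-t)^{3/2})$ from Taylor-expanding $\phi_t$ against the Gaussian kernel.

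Finally, for the first term I invoke part (2) of the heat kernel proposition, which gives precisely
\[
q_{t,\tilde{t}}(x,y) \;\sim\; \frac{1}{\sqrt{4\pi\sigma_t(x)(\tilde{t}-t)}}\exp\!\left(-\frac{(y-x-\sigma_t'(x)(\tilde{t}-t))^2}{4\sigma_t(x)(\tilde{t}-t)}\right)
\]
as $\tilde{t}-t\to 0$. Collecting the two pieces produces the stated formula. The main technical obstacle is quantifying the freezing error: one needs to compare the solution of the frozen equation with the true gradient flow, using the fact that $\|\phi_s-\phi_t\|$ and $\|\varepsilon_s-\varepsilon_t\|$ are $O(\tilde{t}-t)$ on the relevant interval, and a standard parabolic stability estimate (or Duhamel's formula applied to the difference) then shows the discrepancy is absorbed into the $\sim$ in the conclusion.
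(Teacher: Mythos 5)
Your proof is correct and rests on the same decomposition as the paper's --- solution equals propagated initial datum plus integrated source, with the short-time Gaussian asymptotics (including the drift correction $\sigma_t'(x)(\tilde t-t)$) supplying the kernel --- but you phrase it analytically where the paper phrases it probabilistically. The paper's proof writes the solution via the Feynman--Kac formula for the diffusion $dB_s=\sigma_s'(B_s)\,ds+\sqrt{2\sigma_s(B_s)}\,dW_s$, so that $T_{\tilde t}(x)=\mathbb{E}\bigl[\int_t^{\tilde t}\phi_s(B(s))\,ds\bigr]+\mathbb{E}[T_t(B_{\tilde t})]$, and then Taylor-expands the law of $B_{\tilde t}$ to a Gaussian; your Duhamel representation with the fundamental solution $q_{t,s}$ is the exact analytic dual of this, with $\mathbb{E}[T_t(B_{\tilde t})]=\int q_{t,\tilde t}(x,y)T_t(y)\,dy$ and the expected occupation integral equal to your source term. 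What your version buys is a more explicit link to part (2) of the earlier heat-kernel proposition and, more importantly, an explicit accounting of the coefficient-freezing error: the paper compresses this into the sentence ``up to the first order, diffusivity is constant at a certain scale'' (and remarks later that the short-time asymptotics does not actually require operator splitting), whereas you isolate it as the main technical obstacle and sketch the parabolic stability estimate needed to absorb it into the $\sim$. What the paper's probabilistic version buys is that the drift term in the Gaussian appears transparently as the drift of the diffusion rather than as an output of the kernel asymptotics. Either route is legitimate, and your error analysis is, if anything, more careful than the paper's.
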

\begin{proof}
The proof uses the Feynman-Kac formula for path integrals (Lorinczi,
Hiroshima, and Betz 2011) to study the short-time behavior of solutions
of equations of the type (\ref{eq:Heat PDE}). Let $B\left(s\right)$
denote the diffusion process that satisfies the SDE\textbf{.} 
\[
dB_{s}=\sigma'_{s}\left(B_{s}\right)ds+\sqrt{2\sigma_{s}\left(B_{s}\right)}dW_{s},
\]
where $W$ is a Brownian motion. We then have 
\begin{eqnarray*}
T_{\tilde{t}}\left(x\right) & =\mathbb{E}\left[\int_{t}^{\tilde{t}}\phi_{s}\left(B\left(s\right)\right)ds\right]+\mathbb{E}[T_{t}(B_{\tilde{t}})],
\end{eqnarray*}
where the expectation runs over the diffusion process $B_{s}$, started
in $y$ and running up to $\tilde{t}$.\footnote{Technically, we need to specify boundary conditions for the Brownian
motion $B(s)$, however, since we are only using Brownian motion for
very small times $\tilde{t}$, it does not matter very much whether
we prescribe absorbing boundary conditions corresponding to Dirichlet
conditions or reflecting boundary conditions corresponding to Neumann
conditions. } We can now perform a Taylor expansion of these quantities. Up to
the first order, diffusivity is constant at a certain scale. The short-time
asymptotics for Brownian motion is then given by a Gaussian distribution
$B_{\tilde{t}}\sim x+\sigma_{t}'\left(x\right)\left(\tilde{t}-t\right)+\sqrt{2\sigma_{t}\left(x\right)}W_{t}\sim N\left(x+\sigma_{t}'\left(x\right)\left(\tilde{t}-t\right),2\sigma_{t}\left(x\right)\left(\tilde{t}-t\right)\right)$
and 
\begin{eqnarray*}
\text{distribution of }B_{\tilde{t}} & \sim & \frac{1}{\sqrt{4\pi\sigma_{t}\left(x\right)\left(\tilde{t}-t\right)}}\exp\left(-\frac{\left(y-x-\sigma_{t}'\left(x\right)\left(\tilde{t}-t\right)\right)^{2}}{4\sigma_{t}\left(x\right)\left(\tilde{t}-t\right)}\right).
\end{eqnarray*}
Then, up to a first order for $\tilde{t}$ small, 
\[
\int_{t}^{\tilde{t}}\phi_{t}\left(B\left(s\right)\right)ds\sim\left(\tilde{t}-t\right)\phi_{t}\left(x\right).
\]
This implies that 
\[
T_{\tilde{t}}\left(x\right)\sim\left(\tilde{t}-t\right)\phi_{t}\left(x\right)+\mathbb{E}[T_{t}(B_{\tilde{t}})]
\]
\begin{equation}
\sim\left(\tilde{t}-t\right)\phi_{t}\left(x\right)+\int_{\underline{y}}^{\bar{y}}\frac{1}{\sqrt{4\pi\sigma_{t}\left(x\right)\left(\tilde{t}-t\right)}}\exp\left(-\frac{\left(y-x-\sigma_{t}'\left(x\right)\left(\tilde{t}-t\right)\right)^{2}}{4\sigma_{t}\left(x\right)\left(\tilde{t}-t\right)}\right)T_{t}(y)dy.\label{eq:Tax over short time}
\end{equation}
\end{proof}
This equation extends the notion of fairness that we derived for the
optimal tax to that of the trajectory of the gradient flow of taxes.
The government that considers a tax reform in the direction of maximizing
revenues changes the tax such that the new, evolved tax $T_{\tilde{t}}$$\left(x\right)$
is equal to the weighted average of the initial taxes $T_{t}\left(y\right)$
plus another source term $\phi_{t}\left(x\right)$ that evaluates
the mechanical effect of the revenues collected. In other words, the
gradient flow wants to ensure that an agent at a given income $x$
is paying roughly the average of the amount of taxes paid by people
working just a little less or just a little more. The conductivity
$\varepsilon_{t}\left(x\right)\phi_{t}\left(x\right)$ determines
the scale of the Gaussian and the drift correction and the function
$\phi_{t}\left(x\right)$ determines the asymptotic deviation.\footnote{\label{fn:Short-term optimal tax}Of course, this equation also holds
for the optimal tax, as it is a stationary point of the gradient flow
in which both the taxes and agents' behavior no longer change: $T_{*}\left(x\right)=t\phi_{*}\left(x\right)+\int_{\underline{y}}^{\bar{y}}\frac{1}{\sqrt{4\pi\sigma_{_{*}}\left(x\right)t}}\exp\left(-\frac{\left(y-x-\sigma_{*}'\left(x\right)t\right)^{2}}{4\sigma_{*}\left(x\right)t}\right)T_{*}(y)dy+o\left(t\right)$. } 

Note, that the property (\ref{eq:Tax over short time}) holds for
any starting time $t$ on the gradient flow trajectory and that we
average over the known and given parameters $\varepsilon_{t}\left(x\right)$
and $\phi_{t}\left(x\right)$ evaluated at the time $t$. That is,
it is a closed-form expression. This characterization is valid for
the short time $\tilde{t}$ as these parameters are essentially frozen
over that short time interval. For the large time, it is also a fairly
good approximation of the heat kernel, but with the conductivity and
the source needing to be adjusted as the agents change their behavior
in response to the evolution of the tax function. In contrast, the
results for the optimal tax are derived for any time (scale) $t$
as well as for the short term asymptotics but include the elasticities
and the densities evaluated at the optimum. 

There is also a sense in which the gradient flow acts on the points
that deviate most from this principle. It is encoded in the equation
\[
T_{\tilde{t}}\left(x\right)\sim\left(\tilde{t}-t\right)\phi_{t}\left(x\right)+\mathbb{E}[T_{t}(B_{\tilde{t}})].
\]
Say the source is $\phi_{t}\equiv0$ and suppose that $T_{t}=1$ everywhere
except in $x=0$. Then $T_{\tilde{t}}\left(0\right)=1$ as the averaging
corrects this deviation. We elaborate the discussion of the smoothing
properties of the gradient flow in the next section.

\subsection{Gradient flow as a heat equation \label{subsec:Operator-Splitting}}

In this section, we propose a construction of a trajectory of the
tax reform and show how the gradient flow of the revenue functional
gives rise to the heat equation.

Equation (\ref{eq:Heat PDE}) changes the tax schedule in favor of
increasing government revenue, letting $\phi_{t}$ and $\varepsilon_{t}$
be endogenously driven by $T_{t}$ -- that is, taking into account
the fact that the density and the elasticity change in response to
the evolution of taxes.

We propose to evolve the system separately (this underlying idea is
a straightforward application of ``operator splitting''). The simplest
instance of this idea is as follows. Suppose we are given a system
of ordinary differential equations given as 
\[
\frac{d}{dt}u(t)=(A+B)u(t),
\]
then the solution is given by the matrix exponential $u(t)=e^{t(A+B)}u(0)$.
A formal Taylor series expansion suggests that 
\begin{align*}
e^{t(A+B)} & =\mbox{Id}+t(A+B)+\mathcal{O}(t^{2})\\
 & =(\mbox{Id}+tA)(\mbox{Id}+tB)+\mathcal{O}(t^{2})\\
 & =e^{tA}e^{tB}u(0)+\mathcal{O}(t^{2}).
\end{align*}
These computations are purely formal but they do suggest that, at
least for small values of $t$, we may solve the system by first evolving
along the simpler system $\dot{u}(t)=Au$ and then along the system
$\dot{u}=Bu$ and alternate in this manner (Varga (1962), Glowinski
and Osher (2016)).

We apply the very same method in our problem: more precisely, we fix
the distribution of incomes $\phi_{t}$ and the elasticity $\varepsilon_{t}$
for a short period of time $\delta t$, evolve $T_{t}$, and then
re-compute $\phi_{t+\delta t}$ and $\varepsilon_{t+\delta t}$ based
on the new tax function $T_{t+\delta t}$.\footnote{This also can be regarded as a classical numerical technique for systems
of this type.} In standard situations, this procedure will converge to a solution
path of the dynamical system as $\delta t\rightarrow0$ (Glowinski,
Osher, and Yin (2016)). The operator splitting technique also has
a natural economic meaning. The government evolves taxes in the direction
of increased revenues, keeping the density of agents' incomes and
elasticities fixed at their value observed in the current economy.
That is, the government evaluates the changes in revenues under the
current information given by the exogenous sufficient statistics evaluated
at a given initial time.

This implies equation (\ref{eq:Heat PDE}) is a heat equation (with
source term $\phi$ and local conductivity $\sigma=\varepsilon\phi$),
i.e. a PDE of the form 
\begin{equation}
\frac{\partial T}{\partial t}=\phi\left(y\right)+\frac{\partial}{\partial y}[\sigma\left(y\right)\frac{\partial T}{\partial y}]\label{eq:heat equation for splitting frozen}
\end{equation}
and guarantees in particular that the problem always has a solution
(the heat equation being well-posed). We also note that our assumption
that the density tends to 0 at the boundary of the interval implies
that no boundary conditions need be imposed. Since heat equations
are among the most well-known and well-behaved partial differential
equations, we can apply standard mathematical results to obtain theoretical
properties of the evolution of the tax schedule over time.

Fixing $\phi$ and $\varepsilon$, letting $T$ evolve for a short
amount of time, then unfreezing $\phi$ and $\varepsilon$ and recomputing
it can be regarded as a classical example of operator splitting. While
the analysis of convergence of this dynamic system is outside the
scope of the paper, one can expect that for sufficiently short time
steps, the solution converges to the global optimum at a great level
of generality. For example, the review of Glowinski, Osher, and Yin
(2016, p.13) concludes: ``Last but not least, operator splitting
algorithms are theoretically attractive because they converge under
very few assumptions.'' More broadly, the splitting procedure we
use is similar in spirit to the ones used in physical sciences where
the split terms correspond to different physical processes -- for
example, splitting convection from diffusion (see, e.g. MacNamara
and Strang (2016)) or splitting fast from slow variables.

\subsection{Smoothing properties of the heat equation}

We now present various smoothing properties of the gradient flow arising
within a step of the operator splitting; this gradient flow is realized
by a heat equation. Most of the results in this section are straightforward
adaptations of very classical results for the heat equation. Since
the heat equation can alternatively be realized as a gradient flow
in a certain Sobolev-type space, it has a very nice underlying smoothing
structure which is reflected in a large number of beneficial mathematical
properties. Moreover, since the heat equation is one of the most studied
(and well-behaved) objects in mathematics, the list of the useful
properties is very large. We therefore focus only on some of them
in this section that show how the heat equation smoothes the underlying
functions. Various other properties could be of interest but are outside
the scope of this paper. For the rest of the section to ease the notational
burden we suppress the indexing of taxes with $t$ whenever it does
not cause confusion. 

Since the initial tax system can be arbitrary, we need to introduce
the proper reference point for the analysis of smoothing of a given
tax system. We proceed as follows. Since $\phi(y)$ and $\sigma\left(y\right)$
are fixed, the solution will converge to a fixed point as $t\rightarrow\infty$
and the unique fixed point is given by setting $\partial T/\partial t=0$
resulting in 
\begin{equation}
0=\phi\left(y\right)+\frac{\partial}{\partial y}[\sigma\left(y\right)\frac{\partial T\left(y\right)}{\partial y}].\label{eq:definition of tau}
\end{equation}
We define the solution to this equation as $\tau\left(y\right)$.
This stationary point is the solution to the problem of maximizing
tax revenue, conditional on keeping the density of agents' incomes
and elasticities fixed. In other words, this is the tax schedule under
exogenous sufficient statistics that would be optimal if these sufficient-statistic
variables were fixed and equal to their value observed in the current
economy, i.e., given the current tax $T_{t}$. 

It is important to note, however, that we are not interested in letting
the gradient flow for all $t\rightarrow\infty$ and thus finding $\tau$
but only evolving it for a very small time $t$ (so as to respect
the operator splitting). In this sense, $\tau$ serves as a proper
reference point to describe various smoothing properties of the gradient
flow. 

We first relate the gradient flow arising from revenue maximization
to another gradient flow with strong smoothing properties.
\begin{prop}
Equation (\ref{eq:definition of tau}) coincides with the gradient
flow of the functional $\mathcal{J}$ given by the weighted Sobolev-type
seminorm $H^{1}$ 
\begin{eqnarray*}
\mathcal{J}\left(T\right) & = & \frac{1}{2}\int_{}^ {}{\sigma(y)\left(T'\left(y\right)-\tau'\left(y\right)\right)^{2}dy}
\end{eqnarray*}
\end{prop}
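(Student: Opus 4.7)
The plan is to interpret the statement as saying that the heat equation (\ref{eq:heat equation for splitting frozen}) (i.e.\ the gradient flow of $R$ within one operator-splitting step) coincides, up to sign, with the $L^{2}$ gradient descent of the functional $\mathcal{J}$. Since $\tau$, $\sigma$ and $\phi$ are frozen, $\mathcal{J}(T)$ is really just a weighted $H^{1}$-distance between $T$ and the step's stationary tax $\tau$, so this identification exhibits the heat equation as monotonically driving $T$ toward $\tau$, which is the key smoothing picture used in the rest of the section.

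First I would compute the Gateaux differential of $\mathcal{J}$ at $T$ in a direction $\hat T$. Because only the $T'$ term varies,
\[
\delta\mathcal{J}(T,\hat T)=\int\sigma(y)\bigl(T'(y)-\tau'(y)\bigr)\hat T'(y)\,dy.
\]
Next I would integrate by parts. The boundary terms drop out because $\phi$ was assumed to vanish at the endpoints of $\Theta$, hence $\sigma=\varepsilon\phi$ vanishes at the boundary of $\mathbb{Y}$; this yields
\[
\delta\mathcal{J}(T,\hat T)=-\int\hat T(y)\,\frac{\partial}{\partial y}\!\Bigl[\sigma(y)\bigl(T'(y)-\tau'(y)\bigr)\Bigr]dy.
\]

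Then I would expand the derivative and use the defining equation (\ref{eq:definition of tau}) of $\tau$, namely $\partial_{y}[\sigma\tau']=-\phi$, to collapse the $\tau$-term into a source:
\[
\delta\mathcal{J}(T,\hat T)=-\int\hat T(y)\Bigl[\phi(y)+\frac{\partial}{\partial y}\bigl(\sigma(y)T'(y)\bigr)\Bigr]dy = \Bigl\langle-\bigl[\phi+\partial_{y}(\sigma T')\bigr],\hat T\Bigr\rangle_{L^{2}}.
\]
Thus the $L^{2}$ gradient of $\mathcal{J}$ is $\nabla\mathcal{J}(T)=-[\phi+\partial_{y}(\sigma T')]$, and the gradient descent law $\partial_{t}T=-\nabla\mathcal{J}(T)$ is exactly the heat equation (\ref{eq:heat equation for splitting frozen}). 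In particular, along the flow $\mathcal{J}$ decays at rate $\|\nabla\mathcal{J}\|_{L^{2}}^{2}$, so the weighted $H^{1}$ deviation of $T$ from $\tau$ is a Lyapunov functional.

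The only real obstacle is the boundary step in the integration by parts; it is handled by the standing assumption that $\phi$ (and therefore $\sigma$) vanishes at the boundary of the income interval, after which every remaining step is a one-line substitution using the stationary equation defining $\tau$.
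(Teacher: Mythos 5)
Your proposal is correct and follows essentially the same route as the paper: compute the Gateaux differential of $\mathcal{J}$, integrate by parts, and use the stationary equation $\partial_{y}[\sigma\tau']=-\phi$ to recover the heat equation as the negative $L^{2}$ gradient flow. Your explicit justification of the vanishing boundary terms (via $\sigma=\varepsilon\phi\to 0$ at the endpoints) and the Lyapunov remark are minor additions the paper leaves implicit, but the argument is the same.
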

\begin{proof}
Let us compute the directional derivative in the direction of a function
$w$ evaluated at $T$ 
\[
\delta\mathcal{{J}}(w)=\lim_{\varepsilon\rightarrow0}{\frac{\mathcal{{J}}(T+\varepsilon w)-\mathcal{{J}}(T)}{\varepsilon}}.
\]
We see that 
\begin{align*}
\lim_{\varepsilon\rightarrow0}{\frac{\mathcal{{J}}(T+\varepsilon w)-\mathcal{{J}}(T)}{\varepsilon}} & =\int\sigma(y)(T'\left(y\right)-\tau'\left(y\right))w'\left(y\right)dy\\
 & =\int w\left(y\right)\frac{\partial}{\partial y}\left(\sigma(y)\left(\tau'\left(y\right)-T'\left(y\right)\right)\right)dy.
\end{align*}
This shows that the negative gradient flow is given by 
\[
\frac{{\partial T}}{\partial t}=-\frac{\partial}{\partial y}\left(\sigma(y)\left(\tau'\left(y\right)-T'\left(y\right)\right)\right)=-\frac{\partial}{\partial y}\left(\sigma(y)\tau'\left(y\right)\right)+\frac{\partial}{\partial y}\left(\sigma(y)T'\left(y\right)\right).
\]
Using the equation for $\tau$, we see that $-\frac{\partial}{\partial y}\left(\sigma(y)\tau'\left(y\right)\right)=\phi(y)$
and we have established the desired claim. 
\end{proof}
This proposition has an interesting economic meaning. This gradient
flow has the effect of trying to smooth out rough irregularities in
the difference between $T$ and $\tau$ -- a large value of $|T'(y)-\tau'\left(y\right)|$
implies the existence of a large value of $\left(T'(y)-\tau'\left(y\right)\right)^{2}$
and the flow is trying to decrease this as quickly as possible. We
note that the quantity $\sigma(y)\geq0$ serves as a natural weighting
measure: if $\sigma(y)$ is large, then irregularities in that region
count even more severely and are dampened quicker than in regions
where $\sigma(y)$ is very small.

Moreover, if $T(y)$ has large amounts of strong oscillations or maybe
even discontinuous jumps, then the gradient flow acts strongest on
those parts first. This leads to the following proposition which follows
from the classical result on parabolic equations. 
\begin{prop}
Let $T\left(t,y\right)$ denote the solution of (\ref{eq:heat equation for splitting frozen}).
If $\sigma\left(y\right)$ is bounded away from $0$ and $T\left(0,y\right)$
is bounded, then $T\left(t,y\right)$ is infinitely differentiable
for any $t>0$. 
\end{prop}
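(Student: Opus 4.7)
The plan is to exploit the fact that (\ref{eq:heat equation for splitting frozen}) is uniformly parabolic: the hypothesis $\sigma(y)\geq \sigma_{0}>0$ makes the spatial operator $LT = \partial_{y}(\sigma\,\partial_{y}T)$ uniformly elliptic, placing the equation squarely within the classical smoothing theory of second-order parabolic PDEs, where bounded initial data become instantaneously smooth in the spatial variable.

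My approach is via the Duhamel/heat-kernel representation already developed in Section~\ref{sec:The-optimal-tax}. Writing the solution as
\[
T(t,y) \;=\; \int q_{t}(y,z)\,T(0,z)\,dz \;+\; \int_{0}^{t}\!\!\int q_{t-s}(y,z)\,\phi(z)\,dz\,ds,
\]
one reduces all regularity questions to those of the kernel $q_{t}$. The key classical fact (see, e.g., Grigor'yan (2009) or Friedman (2008)) is that for uniformly parabolic operators with smooth coefficients, $q_{t}(x,y)$ is $C^{\infty}$ in the spatial variables for every $t>0$, and its derivatives $\partial_{x}^{\alpha}\partial_{y}^{\beta}q_{t}$ obey Gaussian-type upper bounds analogous to the Gaussian bound recorded in Section~\ref{sec:The-optimal-tax}. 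These decay estimates furnish integrable majorants, so one may differentiate under the integral to arbitrary order and conclude $T(t,\cdot)\in C^{\infty}$ for each $t>0$. Smoothness in $t$ is then obtained by bootstrapping directly in the PDE: $\partial_{t}T=LT+\phi$ is smooth in $y$, hence iterating $\partial_{t}^{k}T=L^{k}T + L^{k-1}\phi$ is smooth in $y$ for every $k$, giving joint smoothness in $(t,y)\in(0,\infty)\times\mathbb{Y}$.

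The main technical point I expect to need care with is the smoothness and derivative estimates on $q_{t}$ themselves, together with justifying the differentiation under the integral when the initial datum is only assumed bounded (not integrable). The Gaussian-type bounds on derivatives of $q_{t}$ handle the first integral because $|\partial_{x}^{\alpha}\partial_{y}^{\beta}q_{t}(x,z)|$ is integrable in $z$ with a bound uniform in $x$ on compact sets, and the second integral is controlled by integrability of $\phi$ on the compact set $\mathbb{Y}$.

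A fully self-contained alternative that bypasses explicit kernel estimates is the classical interior-regularity bootstrap: the De~Giorgi--Nash--Moser theorem yields local H\"older regularity of bounded weak solutions of uniformly parabolic equations with bounded measurable coefficients, after which Schauder estimates upgrade this iteratively to $C^{k,\alpha}$ for every $k$, and hence to $C^{\infty}$ on any interior parabolic cylinder. Either route delivers the claim, and in both cases the hypothesis that $\sigma$ is bounded away from zero enters precisely as the uniform parabolicity assumption.
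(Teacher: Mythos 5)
The paper offers no proof of this proposition at all --- it simply asserts that the statement ``follows from the classical result on parabolic equations.'' Your proposal therefore supplies detail the paper leaves implicit, and both routes you sketch (the heat-kernel/Duhamel representation with Gaussian derivative bounds, and the De~Giorgi--Nash--Moser plus Schauder bootstrap) are standard and correct ways to establish instantaneous smoothing for uniformly parabolic equations; the kernel route also has the virtue of connecting directly to the representation and Gaussian bounds the paper develops in its Section on the optimal tax, and your identification of uniform parabolicity as the precise role of the hypothesis $\sigma \geq \sigma_{0} > 0$ is exactly right.

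One caveat you should make explicit: $C^{\infty}$ regularity of $T(t,\cdot)$ cannot follow from uniform ellipticity alone --- it also requires smoothness of the coefficient $\sigma$ and of the source $\phi$. In your second route, De~Giorgi--Nash--Moser with merely bounded measurable $\sigma$ yields only local H\"older continuity, and each Schauder upgrade from $C^{k,\alpha}$ to $C^{k+1,\alpha}$ consumes one additional derivative of $\sigma$; the iteration reaches $C^{\infty}$ only if $\sigma, \phi \in C^{\infty}$. Likewise, in the kernel route the $C^{\infty}$ spatial regularity of $q_{t}$ presupposes smooth coefficients, as you note in passing. The paper's standing assumptions only make $\phi$ and $\varepsilon$ (hence $\sigma = \varepsilon\phi$) continuous, so this is really an implicit hypothesis of the proposition itself rather than a defect of your argument --- but a careful write-up should state it, or else weaken the conclusion to the regularity actually available (e.g., $C^{1,\alpha}$ in $y$ for H\"older $\sigma$). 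With that hypothesis added, either of your two routes is a complete and correct proof.
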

This implies that the tax reform viewed as the gradient flow leads
to the continuous tax systems. Moreover, the gradient flow has the
effect of mollifying any tax scheme instantaneously.

The next proposition shows that the gradient flow smoothes a measure
of variability of the tax schedule, the squared deviation from the
limiting stationary solution. Moreover, such smoothing is exponential.
As we discussed above, this result is not about the convergence to
the optimal tax $T_{*}$ but rather about the smoothing behavior of
the gradient flow at each step of the operator splitting. As the initial
tax function $T\left(0,y\right)$ and hence the associated agents'
behavior (that determine $\phi\left(y\right)$ and $\varepsilon\left(y\right)$)
can be arbitrary, the correct reference point for this smoothing behavior
is the corresponding stationary solution $\tau\left(y\right)$.
\begin{prop}
Let $T\left(t,y\right)$ be the solution to (\ref{eq:heat equation for splitting frozen}),
$T\left(0,y\right)$ be an arbitrary initial tax schedule, and $\tau\left(y\right)$
be the solution to the stationary problem (\ref{eq:definition of tau}),
and $\lambda_{1}$ be the first eigenvalue of the associated Sturm-Liouville
operator: 
\begin{equation}
\frac{\partial T}{\partial t}=\phi\left(y\right)+\frac{\partial}{\partial y}[\sigma\left(y\right)\frac{\partial T}{\partial y}].\label{eq: Sturm Liouville PDE}
\end{equation}
Then, $\forall t>0:$ 
\[
\int_{a}^{b}{(T(t,y)-\tau\left(y\right))^{2}dy}\leq e^{-2\lambda_{1}t}\int_{a}^{b}{(T(0,y)-\tau(y))^{2}dy}.
\]
\end{prop}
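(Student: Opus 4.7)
The plan is to linearize around the stationary solution. Define $u(t,y) := T(t,y) - \tau(y)$ and use the stationary equation (\ref{eq:definition of tau}) to cancel the source $\phi(y)$; the difference then satisfies the homogeneous divergence-form heat equation
\begin{equation*}
\partial_t u \;=\; \partial_y\bigl[\sigma(y)\,\partial_y u\bigr], \qquad u(0,y)=T(0,y)-\tau(y).
\end{equation*}
The claim thus reduces to showing that the squared $L^2$-norm of $u$ contracts exponentially at rate $2\lambda_1$ along this Sturm--Liouville heat flow, which is a purely linear statement about the associated spatial operator $L v := -\partial_y(\sigma v')$.

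To establish the contraction I would use the classical energy method: multiply the homogeneous equation by $u$, integrate over $[a,b]$, and integrate by parts once to obtain
\begin{equation*}
\frac{1}{2}\,\frac{d}{dt}\int_a^b u(t,y)^2\,dy \;=\; \bigl[u\,\sigma\,u_y\bigr]_a^b \;-\; \int_a^b \sigma(y)\,u_y(t,y)^2\,dy .
\end{equation*}
The boundary term vanishes because the density $h$ (hence $\phi$ and $\sigma=\varepsilon\phi$) is assumed to vanish at $\partial\Theta$; this is precisely the ``natural'' boundary condition that makes $L$ self-adjoint on $L^2([a,b])$ without having to impose any extra Dirichlet or Neumann data. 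Invoking next the Rayleigh quotient characterization of the first eigenvalue,
\begin{equation*}
\lambda_1 \;=\; \inf_{v\ne 0}\,\frac{\int_a^b \sigma(y)\,(v'(y))^2\,dy}{\int_a^b v(y)^2\,dy},
\end{equation*}
evaluated at $v=u(t,\cdot)$, yields the differential inequality $\tfrac{d}{dt}\|u\|_{L^2}^2 \le -2\lambda_1 \|u\|_{L^2}^2$, and integrating from $0$ to $t$ gives the stated bound by Gronwall's lemma.

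The main obstacle is the degeneracy of $\sigma$ at the endpoints, which renders the Sturm--Liouville problem singular: the equation $L v=0$ forces $\sigma v'$ to be constant and, since $\sigma$ vanishes at the boundary, $v$ must be constant. Consequently the kernel of $L$ is one-dimensional, the very first eigenvalue is $0$, and the meaningful decay rate is the spectral gap. One must therefore interpret $\lambda_1$ as the first \emph{positive} eigenvalue and verify that $u$ remains orthogonal to this kernel for all $t$. The flow itself preserves the mean, since
\begin{equation*}
\frac{d}{dt}\int_a^b u\,dy \;=\; \int_a^b \partial_y[\sigma\,u_y]\,dy \;=\; \bigl[\sigma\,u_y\bigr]_a^b \;=\; 0,
\end{equation*}
so fixing the free additive constant in $\tau$ (which is undetermined by (\ref{eq:definition of tau})) so that $\int_a^b u(0,y)\,dy = 0$ guarantees $u(t,\cdot)\perp \mathrm{ker}(L)$ for all $t>0$, and then the Rayleigh inequality applies with $\lambda_1$ the spectral gap. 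Discreteness of the spectrum and existence of a positive gap follow from standard singular Sturm--Liouville theory under the continuity and positivity-in-the-interior assumptions on $\phi$ and $\varepsilon$ from Section~\ref{sec:Environment}; modulo this technical setup, the energy estimate proceeds as above.
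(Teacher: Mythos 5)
Your argument is correct, but it reaches the conclusion by a genuinely different route from the paper. The paper performs a full spectral decomposition: it expands $z(0,\cdot)=T(0,\cdot)-\tau$ in the $L^{2}$-orthonormal eigenfunctions $\eta_{k}$ of $Hz=-\partial_{y}(\sigma z')$, writes the solution explicitly as $\tau+\sum_{k\ge1}a_{k}e^{-\lambda_{k}t}\eta_{k}$, and obtains the bound from Parseval's identity, with $\lambda_{1}$ characterized afterwards by the constrained Rayleigh--Ritz formula (\ref{eq:lambda1}). You instead use the energy method: multiply the homogeneous equation for $u=T-\tau$ by $u$, integrate by parts, bound the Dirichlet form from below by $\lambda_{1}\|u\|_{L^{2}}^{2}$ via the Rayleigh quotient, and close with Gronwall. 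Your route is more elementary in that it never needs completeness of the eigenfunction system or term-by-term differentiation of the series (the paper's ``this together with the completeness of the system'' step), only the variational characterization of the spectral gap. You are also more explicit than the paper on a point the paper glosses over: the expansion in the paper starts at $k=1$, which tacitly assumes $\int_{a}^{b}(T(0,y)-\tau(y))\,dy=0$; since $\tau$ is only determined by (\ref{eq:definition of tau}) up to an additive constant, this is a normalization that must be made, and your observation that the flow preserves the mean of $u$ is exactly what justifies applying the gap inequality at every $t>0$. One small caveat: your first display of the Rayleigh quotient, taken over all $v\neq0$, would give $\lambda_{1}=0$ because constants lie in the kernel; the correct statement is the infimum over mean-zero $v$, as in (\ref{eq:lambda1}) and as you acknowledge in your closing paragraph --- you should state it that way from the outset. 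Modulo that, the two proofs establish the same estimate, and yours buys a shorter argument at the cost of not producing the explicit series representation of $T(t,\cdot)$ that the paper also uses for exposition.
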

\begin{proof}
We use the standard Sturm-Liouville theory (see Zettl (2010), Teschl
(2012), Titchmarsh (1962)) to prove this result. This equation can
be studied by first solving for the stationary problem 
\[
0=\phi\left(y\right)+\frac{\partial}{\partial y}[\sigma\left(y\right)\frac{\partial\tau}{\partial y}].
\]
Subtracting both equations leads to an equation for $z(t,x)=T(t,x)-\tau(x)$
given by 
\[
\frac{\partial z}{\partial t}=\frac{\partial}{\partial y}[\sigma\left(y\right)\frac{\partial z}{\partial y}].
\]
It remains to study problems of this type. We will do so by studying
the spectrum of the associated differential operator $H$ given by
\[
Hz=-\frac{\partial}{\partial y}[\sigma\left(y\right)\frac{\partial z}{\partial y}]
\]
or, in other words, we study the problem $Hz=\lambda z$. This eigenvalue
problem has a discrete sequence of admissible values $\lambda$ for
which the equation has a solution: these values $0<\lambda_{1}<\lambda_{2}<\dots$
are the eigenvalues of this operator of Sturm-Liouville type, the
corresponding solutions will be denoted by $\eta_{1},\eta_{2},\dots$
and are assumed to be $L^{2}-$normalized, i.e. $\|\eta_{n}\|_{L^{2}}=1$.
We note that $\lambda_{0}=0$ is a special value and $\eta_{0}=\mbox{const}$.
We see that these eigenfunctions are necessarily orthogonal in $L^{2}(a,b)$
since, again by integration by parts, 
\begin{align*}
\int_{a}^{b}{\eta_{k}(y)\eta_{\ell}(y)dy}= & \frac{1}{\lambda_{k}}\int_{a}^{b}{-(\sigma(y)\eta_{k}'(y))'\eta_{\ell}(y)dy}\\
= & -\frac{1}{\lambda_{k}}\int_{a}^{b}{-\sigma(y)\eta_{k}'(y)\eta_{\ell}'(y)dy}\\
= & \frac{1}{\lambda_{k}}\int_{a}^{b}{(-\sigma(y)\eta_{\ell}'(y))'\eta_{k}(y)dy}\\
= & \frac{\lambda_{\ell}}{\lambda_{k}}\int_{a}^{b}{\eta_{k}(y)\eta_{\ell}(y)dy}
\end{align*}
If $k\neq\ell$, then the factor in front of the integral is different
from 1 and the integral is therefore 0. This together with the completeness
of the system of eigenfunctions in $L^{2}$ allows us to expand an
arbitrary initial function $T(0,x)$ into a series 
\[
T(0,x)=\tau\left(x\right)+\sum_{k=1}^{\infty}{\left\langle z(0,x),\eta_{k}(x)\right\rangle \eta_{k}(x)}.
\]
We will abbreviate $a_{k}=\left\langle z(0,x),\phi_{k}(x)\right\rangle $
for simplicity of exposition. We then claim that 
\[
T(t,x)=\tau\left(x\right)+\sum_{k=1}^{\infty}{a_{k}e^{-\lambda_{k}t}\eta_{k}(x)}
\]
is a solution of the partial differential equation (\ref{eq: Sturm Liouville PDE}).
This can be verified by computing 
\begin{align*}
\left(\frac{d}{dt}-\frac{d}{dx}\sigma(x)\frac{d}{dx}\right)\sum_{k=1}^{\infty}{a_{k}e^{-\lambda_{k}t}\eta_{k}(x)} & =\sum_{k=1}^{\infty}{a_{k}\left(\frac{d}{dt}-\frac{d}{dx}\sigma(x)\frac{d}{dx}\right)e^{-\lambda_{k}t}\eta_{k}(x)}.
\end{align*}
The separation of variables implies that 
\[
\left(\frac{d}{dt}-\frac{d}{dx}\sigma(x)\frac{d}{dx}\right)e^{-\lambda_{k}t}\eta_{k}(x)=e^{-\lambda_{k}t}\bigg(-\lambda_{k}\eta_{k}(x)-\frac{d}{dx}\sigma(x)\frac{d}{dx}\eta_{k}(x)\bigg)=0
\]
as desired. Since we now have a complete description of a solution,
we can analyze the convergence to the limiting function arising for
$t\rightarrow\infty$ at a greater level of detail: we have 
\begin{align*}
\int_{a}^{b}{(T(t,y)-\tau\left(y\right))^{2}dy} & =\int_{a}^{b}{\left(\sum_{k=1}^{\infty}{a_{k}e^{-\lambda_{k}t}\eta_{k}(y)}\right)^{2}dy}=\int_{a}^{b}{\sum_{k,\ell=1}^{\infty}{a_{k}e^{-\lambda_{k}t}\eta_{k}(y)a_{\ell}e^{\lambda_{\ell}t}\eta_{\ell}(y)}dy}\\
 & =\int_{a}^{b}{\sum_{\ell=1}^{\infty}{a_{\ell}^{2}e^{-2\lambda_{l}t}\eta_{l}(y)^{2}}dy}=\sum_{\ell=1}^{\infty}{a_{\ell}^{2}e^{-2\lambda_{k}t}}\leq e^{-2\lambda_{1}t}\sum_{\ell=1}^{\infty}{a_{\ell}^{2}}.
\end{align*}
We note that 
\[
\sum_{\ell=1}^{\infty}{a_{\ell}^{2}}=\int_{a}^{b}{(T(0,y)-\tau(y))^{2}dy}
\]
and that we have therefore shown that 
\[
\int_{a}^{b}{(T(t,y)-\tau\left(y\right))^{2}dy}\leq e^{-2\lambda_{1}t}\int_{a}^{b}{(T(0,y)-\tau(y))^{2}dy}.
\]
Appealing to the classical Rayleigh-Ritz formula, we see that 
\begin{align}
\lambda_{1} & =\inf_{\int_{a}^{b}f(y)dy=0}{\frac{\int_{a}^{b}{\sigma(y)f'(y)^{2}dy}}{\int_{a}^{b}{f(y)^{2}dy}}}\label{eq:lambda1}
\end{align}
where the last step follows from the classical Neumann eigenvalue
computation for the homogeneous rod (see Courant and Hilbert (1989)).
This shows that for sufficiently regular values of $\phi(y)$, we
can expect $\lambda_{1}>0$ and therefore the distance to $\tau$$\left(y\right)$
undergoes exponential decay.
\end{proof}
Finally, note that the short term asymptotics results in Section \ref{subsec:Short-term-evolution-and}
do not require the use of operator splitting methods.

\section{Conclusion}

We have shown that dynamical methods can provide new insights on the
analysis of static optimal taxation problems. We show that the heat
kernel and the heat equation are intimately connected with the analysis
of this classic problem. Since the heat equation is one of the most
basic mathematical objects, it possesses a variety of useful properties
that can enrich our understanding of the mechanism design problems.
One such new characteristic is the fairness property where the taxation
system implies that a tax on a given income is a proper average of
taxes at other incomes and income densities. The fairness principle
does not stem from any desire of the planner to be fair but rather
shows that the most efficient way to raise revenues is to equalize
a properly weighted average of taxes at any given income. We have
shown that this principle holds for both the optimum at any scale
and for the gradient flow for the short time approximation. The derivation
of the fairness principle fundamentally relies on the dynamical view
of the static problem.

We now briefly outline some extensions. We considered the simplest
taxation environment to starkly highlight the main contributions of
the paper. Several extensions, some of which we already discussed
in the body of the paper are immediate. Consider a model where utility
is not quasi-linear, the government maximizes social welfare function,
agents have multidimensional types (not incomes). This model delivers
a gradient flow that is very similar to the one that we constructed,
with the properly modified elasticities and the social welfare weights.
More broadly, one can consider a variety of other mechanism design
problems -- nonlinear pricing, matching, etc., -- where gradient
flows, or more broadly viewing a static problem from the point of
view of a dynamical system, may be useful in characterizing the static
optimum and the evolution of the locally improving suboptimal policies. 

\newpage{}

\section*{References}

\[
\]

Aronson, Don. Non-negative solutions of linear parabolic equations,
Ann. Sci. Norm. Sup. 22, p. 607--694. 1968.

Artzner, Philippe, Carl P. Simon, and Hugo Sonnenschein. \textquotedbl{}
Convergence of Myopic Firms to Long-Run Equilibrium via the Method
of Characteristics.\textquotedbl{} Models of Economic Dynamics. Springer,
Berlin, Heidelberg, 157-183. 1986.

Bierbrauer, Felix, and Pierre Boyer. \textquotedbl Politically feasible
reforms of non-linear tax systems.\textquotedbl{} 2018.

Blanchet, Adrien, and Guillaume Carlier. \textquotedbl Optimal transport
and Cournot-Nash equilibria.\textquotedbl{} Mathematics of Operations
Research 41, no. 1: 125-145. 2015.

Bogachev, Vladimir I., Nicolai V. Krylov, Michael Röckner, and Stanislav
V. Shaposhnikov. Fokker-Planck-Kolmogorov Equations. Vol. 207. American
Mathematical Soc., 2015.

Bolton, Patrick, and Christopher Harris. ``The dynamics of optimal
risk sharing.'' No. w16094. National Bureau of Economic Research.
2010.

Brewer, Mike, Emmanuel Saez, and Andrew Shephard. \textquotedbl Means-testing
and tax rates on earnings.\textquotedbl{} Dimensions of Tax Design:
the Mirrlees Review. 2010.

Chiappori, Pierre-André, Robert J. McCann, and Lars P. Nesheim. \textquotedbl Hedonic
price equilibria, stable matching, and optimal transport: equivalence,
topology, and uniqueness.\textquotedbl{} Economic Theory 42, no. 2:
317-354. 2010.

Chiappori, Pierre-André, and Bernard Salanié. \textquotedbl The econometrics
of matching models.\textquotedbl{} Journal of Economic Literature
54, no. 3 : 832-61, 2016.

Courant, Richard and David Hilbert, Methods of mathematical physics.
Vol. I. Interscience Publishers, Inc., New York, N.Y., 1953.

Daskalakis, Constantinos, Alan Deckelbaum, and Christos Tzamos. \textquotedbl Strong
Duality for a Multiple-Good Monopolist.\textquotedbl{} Econometrica
85, no. 3: 735-767. 2017.

Diamond, Peter A. \textquotedbl{} A many-person Ramsey tax rule.\textquotedbl{}
Journal of Public Economics 4.4: 335-342. 1975.

Diamond, Peter A. \textquotedbl{} Optimal income taxation: an example
with a U-shaped pattern of optimal marginal tax rates.\textquotedbl{}
American Economic Review: 83-95. 1998.

Figalli, Alessio, Young-Heon Kim, and Robert J. McCann. \textquotedbl When
is multidimensional screening a convex program?.\textquotedbl{} Journal
of Economic Theory 146, no. 2: 454-478. 2010.

Friedman, Avner. Partial differential equations of parabolic type.
Courier Dover Publications, 2008.

Galichon, Alfred. Optimal transport methods in economics. Princeton
University Press. 2016.

Glowinski, Roland, Stanley J. Osher, and Wotao Yin, eds. Splitting
Methods in Communication, Imaging, Science, and Engineering. Springer,
2017.

Grigor'yan, Alexander. Heat Kernel and Analysis on Manifolds, AMS/IP
Studies in Advanced Mathematics. 2009.

Golosov, Mikhail, Aleh Tsyvinski, and Nicolas Werquin. A variational
approach to the analysis of tax systems. No. w20780. National Bureau
of Economic Research, 2014.

Hellwig, Martin F. \textquotedbl Incentive problems with unidimensional
hidden characteristics: A unified approach.\textquotedbl{} Econometrica
78, no. 4: 1201-1237. 2010.

Hörmander, Lars. The Analysis of Linear Partial Differential Operators
I -- IV, Springer, 2003.

Jordan, Richard, David Kinderlehrer, and Felix Otto. \textquotedbl{}
The variational formulation of the Fokker-Planck equation.\textquotedbl{}
SIAM Journal on Mathematical Analysis 29.1: 1-17. 1998.

Jacquet, Laurence, and Etienne Lehmann. \textquotedbl Optimal Income
Taxation when Skills and Behavioral Elasticites are Heterogeneous.\textquotedbl{}
2015.

Kleven, Henrik Jacobsen. \textquotedbl Sufficient Statistics Revisited.\textquotedbl{}
2018.

Kleven, Henrik Jacobsen, and Claus Thustrup Kreiner. \textquotedbl The
marginal cost of public funds: Hours of work versus labor force participation.\textquotedbl{}
Journal of Public Economics 90, no. 10-11: 1955-1973. 2006.

Kleven, Henrik Jacobsen, Claus Thustrup Kreiner, and Emmanuel Saez.
\textquotedbl The optimal income taxation of couples.\textquotedbl{}
Econometrica 77, no. 2: 537-560. 2009.

Lörinczi, Jozsef, Fumio Hiroshima, and Volker Betz. Feynman-Kac-type
theorems and Gibbs measures on path space: with applications to rigorous
quantum field theory. Vol. 34. Walter de Gruyter, 2011.

MacNamara, Shev, and Gilbert Strang. In Glowinski, Roland, Stanley
J. Osher, and Wotao Yin. Operator splitting. In Splitting Methods
in Communication, Imaging, Science, and Engineering (pp. 95-114).
Springer, Cham. 2016.

Metafune, Giorgio, El Maati Ouhabaz, and Diego Pallara. \textquotedbl Long
time behavior of heat kernels of operators with unbounded drift terms.\textquotedbl{}
Journal of Mathematical Analysis and Applications 377, no. 1: 170-179.
2011.

McCann, Robert J. \textquotedbl{} Academic wages, singularities, phase
transitions and pyramid schemes.\textquotedbl{} Proceedings of the
International Congress of Mathematicians (Seoul 2014). Vol. 3. 2014.

Mirrlees, James A. \textquotedbl{} An exploration in the theory of
optimum income taxation.\textquotedbl{} The Review of Economic Studies
38.2: 175-208. 1971.

Molchanov, Stanislav A. \textquotedbl{} Diffusion processes and Riemannian
geometry.\textquotedbl{} Russian Mathematical Surveys 30, no. 1: 1-63.
1975.

Musgrave, Richard A., and Tun Thin. \textquotedbl{} Income tax progression,
1929-48.\textquotedbl{} Journal of Political Economy 56.6: 498-514.
1948.

Sachs, Dominik, Aleh Tsyvinski, and Nicolas Werquin. Nonlinear tax
incidence and optimal taxation in general equilibrium. No. w22646.
National Bureau of Economic Research. 2016.

Saez, Emmanuel. Using elasticities to derive optimal income tax rates.
The Review of Economic Studies, vol. 68, no 1, p. 205-229. 2001.

Saez, Emmanuel, and Stefanie Stantcheva. \textquotedbl Generalized
social marginal welfare weights for optimal tax theory.\textquotedbl{}
American Economic Review 106, no. 1: 24-45. 2016.

Saez, Emmanuel, and Stefanie Stantcheva. \textquotedbl A simpler
theory of optimal capital taxation.\textquotedbl{} Journal of Public
Economics 162: 120-142. 2018.

Salanie, Bernard. The economics of taxation. MIT press, 2011.

Scheuer, Florian, and Iván Werning. Mirrlees meets diamond-mirrlees.
No. w22076. National Bureau of Economic Research. 2016.

Sonnenschein, Hugo. \textquotedbl{} Price dynamics and the disappearance
of short-run profits: An example.\textquotedbl{} Journal of Mathematical
Economics 8.2: 201-204. 1981.

Sonnenschein, Hugo. \textquotedbl{} Price dynamics based on the adjustment
of firms.\textquotedbl{} The American Economic Review 72.5: 1088-1096.
1982.

Steinerberger, Stefan. \textquotedbl{} Fast escape in incompressible
vector fields.\textquotedbl{} Monatshefte für Mathematik 186, no.
3: 525-537. 2018.

Tao, Terence. \textquotedbl{} On the universality of potential well
dynamics.\textquotedbl{} arXiv preprint arXiv:1707.02389. 2017.

Taylor, Michael. Partial Differential Equations I-III, Springer. 1996.

Teschl, Gerald. Ordinary differential equations and dynamical systems.
Vol. 140. American Mathematical Soc., 2012.

Tirole, Jean, and Roger Guesnerie.\textquotedbl{} Tax reform from
the gradient projection viewpoint.\textquotedbl{} Journal of Public
Economics 15.3: 275-293. 1981.

Titchmarsh, Edward Charles. Eigenfunction expansions associated with
second-order differential equations. Part I. Second Edition Clarendon
Press, Oxford. 1962.

Varadhan, Sathamangalam R. Srinivasa. \textquotedbl{} On the behavior
of the fundamental solution of the heat equation with variable coefficients.\textquotedbl{}
Communications on Pure and Applied Mathematics 20, no. 2: 431-455.
1967.

Varga, Richard S. Matrix Iterative Analysis, New Jersey: Prentice-Hall,
1962

Villani, Cedric. Topics in optimal transportation. No. 58. American
Mathematical Soc., 2003.

Zettl, Anton. Sturm--Liouville Theory. Providence: American Mathematical
Society, 2005.
\end{document}